\def\amsbb{\use@mathgroup \M@U \symAMSb}
\definecolor{darkred}{RGB}{200, 0, 0}
\definecolor{darkgreen}{RGB}{0, 100, 0}
\definecolor{darkblue}{RGB}{0, 0, 200}
\DeclareMathOperator{\trj}{Tr}
\DeclareMathOperator{\spec}{spec}
\newcommand{\ave}[2][]{#1\langle #2 #1\rangle}
\newcommand{\abs}[2][]{#1| #2 #1|}
\newcommand{\nbox}[2][9]{\hspace{#1pt} \mbox{#2} \hspace{#1pt}}
\newcommand{\betaave}{\beta_{\textnormal{ave}}}
\newcommand{\tran}[0]{^\textnormal{\tiny{T}}}
\newcommand{\cH}{\mathcal{H}}
\newcommand{\cL}{\mathcal{L}}
\newcommand{\cM}{\mathcal{M}}
\newcommand{\cR}{\mathcal{R}}
\newcommand{\sB}{\mathscr{B}}
\newcommand{\sF}{\mathscr{F}}
\newcommand{\sP}{\mathscr{P}}
\newcommand{\Tr}[1]{\textnormal{Tr}{\left(#1\right)}}
\newcommand{\Trr}[2]{\textnormal{Tr}_{#1}{\left(#2\right)}}
\def \diracspacing {0.7pt}
\newcommand{\ket}[1]{| \hspace{\diracspacing} #1 \rangle} 
\newcommand{\braket}[2]{\langle #1 \hspace{\diracspacing} | \hspace{\diracspacing} #2 \rangle} 
\newcommand{\ketbra}[2]{| \hspace{\diracspacing} #1 \rangle \langle #2 \hspace{\diracspacing} |} 
\newcommand{\ketbraq}[1]{\ketbra{#1}{#1}} 
\newcommand{\unit}{1\!\!1}
\newcommand{\mean}[2][]{#1\langle #2 #1\rangle}
\newcommand{\CHSH}{\textnormal{CHSH}}
\newcommand{\GHZ}{\textnormal{GHZ}}
\newcommand{\Mer}{\textnormal{Mer}}
\newcommand{\sigABdef}[1]{\sigma_{ A' B_{1} } #1:= ( \Gamma_{A} \otimes \unit_{ B_{1} } )( \tau_{ A B_{1} } )}
\newcommand{\sigBCdef}[1]{\sigma_{ B_{2} C' } #1:= ( \unit_{ B_{2} } \otimes \Gamma_{C} )( \tau_{ B_{2} C } )}
\newcommand{\gfunc}[1]
{
\begin{equation*}
g(x) := \frac{1}{2} + \frac{1}{2} \cdot  \frac{ x - x^{*} }{ 2 \sqrt{2} - x^{*} }#1
\end{equation*}
}
\newcommand{\betacrit}{2.689}
\newtheorem*{thm:self-testing-BSM-robust-2}{Theorem \ref*{thm:self-testing-BSM-robust}}
\newcommand{\BSMtheorem}[1]
{
Let the initial state shared by Alice, Bob and Charlie be of the form
\begin{equation*}
\tau_{A B_{1} B_{2} C} = \tau_{A B_{1}} \otimes \tau_{B_{2} C}
\end{equation*}
and let $\sB := ( B_{B_{1}B_{2}}^{b} )_{b = 0}^{3}$ be a four-outcome measurement acting on $\cH_{B_{1}} \otimes \cH_{B_{2}}$.
\ifthenelse{#1=0}
{If there exist measurements for Alice and Charlie such that the resulting statistics conditioned on $b$ exhibit the maximal violation of the $\CHSH_{b}$ inequality, then there exist completely positive and unital maps $\Lambda_{B_{1}} : \cL( \cH_{ B_{1} } ) \to \cL( \cH_{A'} ), \Lambda_{B_{2}} : \cL( \cH_{ B_{2} } ) \to \cL( \cH_{C'} )$ for $\abs{ A' } = \abs{ C' } = 2$ such that
\begin{equation*}
\big( \Lambda_{B_{1}} \otimes \Lambda_{B_{2}} \big) \big( B_{ B_{1} B_{2} }^{b} \big) =
\Phi_{A'C'}^{b}
\end{equation*}
for $b \in \{0, 1, 2, 3\}$.}
{Let $\tau_{B_{1}} = \trj_{A} \tau_{ A B_{1} }$, $\tau_{B_{2}} = \trj_{C} \tau_{ B_{2} C }$ be the marginal states and $p_{b} := \ave{ \tau_{B_{1}} \otimes \tau_{B_{2}}, B_{ B_{1} B_{2} }^{b} }$ be the probability of Bob observing outcome $b$. Suppose that the statistics of Alice and Charlie conditioned on that outcome give the violation of $\beta_{b}$ of the $\CHSH_{b}$ inequality and that the average violation satisfies $\betaave := \sum_{b} p_{b} \, \beta_{b} > 2$. If we define $q := g( \betaave )$ for
\gfunc{,}
where $x^{*} := ( 16 + 14 \sqrt{2} ) / 17$, then the quality of the real measurement $\sB$ as a simulation of the Bell-state measurement $\Phi$ satisfies
\begin{equation*}
Q( \sB, \Phi ) \geq \frac{1}{ 2 ( 1 + \eta^{*} ) } \min_{ v \in [0, \eta^{*} ] } \bigg[ \frac{2q - 1}{ \sqrt{ 1 - v^{2} } } + \frac{1}{ 1 + v } \bigg],
\end{equation*}
where $\etastardef$.
}
}
\newcommand{\taubdef}
{
\begin{equation*}
p_{b} \tau_{AC}^b = \trj_{B_{1} B_{2}} \big[ ( \unit_{AC} \otimes B_{ B_{1} B_{2} }^{b} ) ( \tau_{A B_{1}} \otimes \tau_{B_{2} C} ) \big].
\end{equation*}
}
\newcommand{\extractabilitylb}[1]
{
\begin{equation*}
F ( ( \Gamma_{A} \otimes \Gamma_{C} ) ( \tau_{AC}^b ), \Phi_{A'C'}^b ) \geq g( \beta_{b})#1
\end{equation*}
}
\newcommand{\exdefbi}{( \Lambda_{B_{1}} \otimes \Lambda_{B_{2}} ) ( F_{ B_{1} B_{2} }^{j} ) = P_{ B_{1}' B_{2}' }^{j}}
\newcommand{\Qunipartite}{Q( \sF, \sP ) := \frac{1}{\abs{A'}} \max_{ \Lambda } \sum_{j = 1}^{d} \ave[\big]{ \Lambda( F_{A}^{j} ), P_{A'}^{j} }}
\newcommand{\Qsep}{Q_{\textnormal{sep}}}
\newcommand{\Qsepdef}{Q_{\textnormal{sep}}(\sP) := \frac{1}{\abs{B_1'} \cdot \abs{B_2'}} \max_{ \sF \in \cM_{\textnormal{sep}} } \sum_{j = 1}^{d} \ave[\big]{ F_{B_1'B_2'}^{j}, P_{B_1'B_2'}^{j} }}
\newcommand{\sqroot}[2]{\big( #1_{A}^{-1/2} \otimes \unit \big) #1_{AB_{#2}} \big( #1_{A}^{-1/2} \otimes \unit \big)}
\newcommand{\etastardef}{\eta^{*} := 2 \sqrt{ q (1 - q ) }}
\newcommand{\be}{\begin{eqnarray}}
\newcommand{\ee}{\end{eqnarray}}
\newtheorem{prop}{Proposition}
\newtheorem{lemma}{Lemma}
\newtheorem{theorem}{Theorem}
\newtheorem{step}{Step}
\def \shortversion {0}
\begin{document}
\title{Self-testing entangled measurements in quantum networks}
\author{Marc Olivier Renou}
\affiliation{D\'epartement de Physique Appliqu\'ee, Universit\'e de Gen\`eve, CH-1211 Gen\`eve, Switzerland}
\author{J\k{e}drzej Kaniewski}
\affiliation{Center for Theoretical Physics, Polish Academy of Sciences, Al.~Lotnik{\'o}w 32/46, 02-668 Warsaw, Poland}
\affiliation{QMATH, Department of Mathematical Sciences, University of Copenhagen, Universitetsparken 5, 2100 Copenhagen, Denmark}
\author{Nicolas Brunner}
\affiliation{D\'epartement de Physique Appliqu\'ee, Universit\'e de Gen\`eve, CH-1211 Gen\`eve, Switzerland}
%
\date{\today}
%

\begin{abstract}
Self-testing refers to the possibility of characterizing an unknown quantum device based only on the observed statistics. Here we develop methods for self-testing entangled quantum measurements, a key element for quantum networks. Our approach is based on the natural assumption that separated physical sources in a network should be considered independent. This provides a natural formulation of the problem of certifying entangled measurements. Considering the setup of entanglement swapping, we derive a robust self-test for the Bell-state measurement, tolerating noise levels up to $\sim 5\%$. We also discuss generalizations to other entangled measurements.

\end{abstract}
\maketitle
\ifthenelse{\equal{\shortversion}{1}}
{}{}
\emph{Formalizing the problem.---}Previous works have developed methods for self-testing entangled states, as well as sets of local measurements. For instance, observing the maximal quantum violation of the CHSH Bell inequality implies that the local measurements are essentially a pair of anti-commuting Pauli observables~\cite{popescu92a, kaniewski17a}. Hence, what is certified in this case is how two measurements relate to each other, but not what they are individually. 

In the present work, we focus on a different problem, namely to self-test a single measurement featuring entangled eigenstates. For clarity, we first formalize the problem without considering the specific structure of the eigenstates. Let $\sP = \big( P_{A'}^{j} \big)_{j = 1}^{d}$ be the ``ideal'' $d$-outcome measurement acting on a Hilbert space $\cH_{A'}$ and $\sF = \big( F_{A}^{j} \big)_{j = 1}^{d}$ be the ``real'' measurement acting on $\cH_{A}$. Our goal is formalize the notion that $\sP$ and $\sF$ are in some sense equivalent. In the standard tomographic (device-dependent) setting we would simply require that all the measurement operators are the same (implying that $\cH_{A'} = \cH_{A}$). Clearly, this cannot work in the device-independent setting as, for instance, one cannot even certify that two Hilbert spaces have the same dimension.

In the device-independent scenario the best we can hope for - what we achieve here for the BSM - is to certify that $\sF$ is at least as powerful as $\sP$, i.e.~that $\sF$ \emph{can be used to simulate} $\sP$. We say that $\sF$ is \emph{capable of simulating} $\sP$ if there exists a completely positive unital map $\Lambda : \cL( \cH_{A} ) \to \cL( \cH_{A'}
 )$ such that
\begin{equation}
\label{eq:exact-definition-unipartite}
\Lambda( F_{A}^{j} ) = P_{A'}^{j}
\end{equation}
for all $j$ and let us justify this definition by providing an explicit simulation procedure. Note that the dual map $\Lambda^{\dagger} : \cL( \cH_{A'} ) \to \cL( \cH_{A} )$ is completely positive and trace-preserving, i.e.~it is a quantum channel. Given an unknown state $\sigma$ acting on $\cH_{A'}$, we would like to obtain the statistics produced under the ideal measurement $\sP$. It suffices to apply the channel $\Lambda^{\dagger}$ to $\sigma$ and perform the real measurement $\sF$.
Indeed, the probability of observing the outcome $j$ is given by
\begin{equation*}
\Pr[j] = \trj \big( \Lambda^{\dagger}(\sigma) F_{A}^{j} \big) 
%
= \trj \big( \sigma P_{A'}^{j} \big),
\end{equation*}
matching the statistics of the ideal measurement.
It is important that the quantum channel $\Lambda^{\dagger}$ is \emph{universal}, i.e.~it does not depend on the input state $\sigma$.

The second key aspect of our problem is the fact that the measurement eigenstates are entangled. Clearly, this is only meaningful given that there is a well-defined bipartition for the measurement device. This point is addressed in a very natural way in the context of quantum networks. Consider as in Fig. 1 a network featuring three observers (Alice, Bob, and Charlie), and two separated sources: the first source distributes a quantum system to Alice and Bob (represented by a state on $ \cH_{A}\otimes\cH_{B_1}$), while the second source distributes a system to Bob and Charlie (given by a state on $ \cH_{B_2}\otimes\cH_{C}$). It is natural to assume that, due to their separation, the two sources are independent from each other, an assumption also made in recent works discussing Bell nonlocality in networks (see e.g. \cite{branciard,fritz}). Hence, Bob receives two well-defined physical systems (one from Alice and one from Charlie), which ensures that his measurement device features a natural bipartition, specifically  $ \cH_{B} = \cH_{B_1}\otimes\cH_{B_2}$. 

The problem of self-testing an entangled measurement can now be formalized as follows. Given an ideal measurement for Bob $\sP = \big( P_{B_1'B_2'}^{j} \big)_{j = 1}^{d}$ acting on $\cH_{B_1'} \otimes \cH_{B_2'}$ and a measurement $\sF = \big( F_{B_1 B_2}^{j} \big)_{j = 1}^{d}$ acting on $\cH_{B_1} \otimes \cH_{B_2}$ we say that $\sF$ is capable of simulating $\sP$ if there exist completely positive unital maps $\Lambda_{B_1} : \cL( \cH_{B_1} ) \to \cL( \cH_{B_1'} )$ and $\Lambda_{B_2} : \cL( \cH_{B_2} ) \to \cL( \cH_{B_{2}'} )$ such that
\begin{equation}
\label{eq:exact-definition-bipartite}
\exdefbi
\end{equation}
for all $j$. Next we look at specific scenarios and show that observing certain statistics allows us to self-test an entangled  measurement, i.e.~conclude that the real measurement applied in the experiment is capable of simulating some specific ideal measurement.

\begin{figure}
\center
\includegraphics[width=0.9\columnwidth]{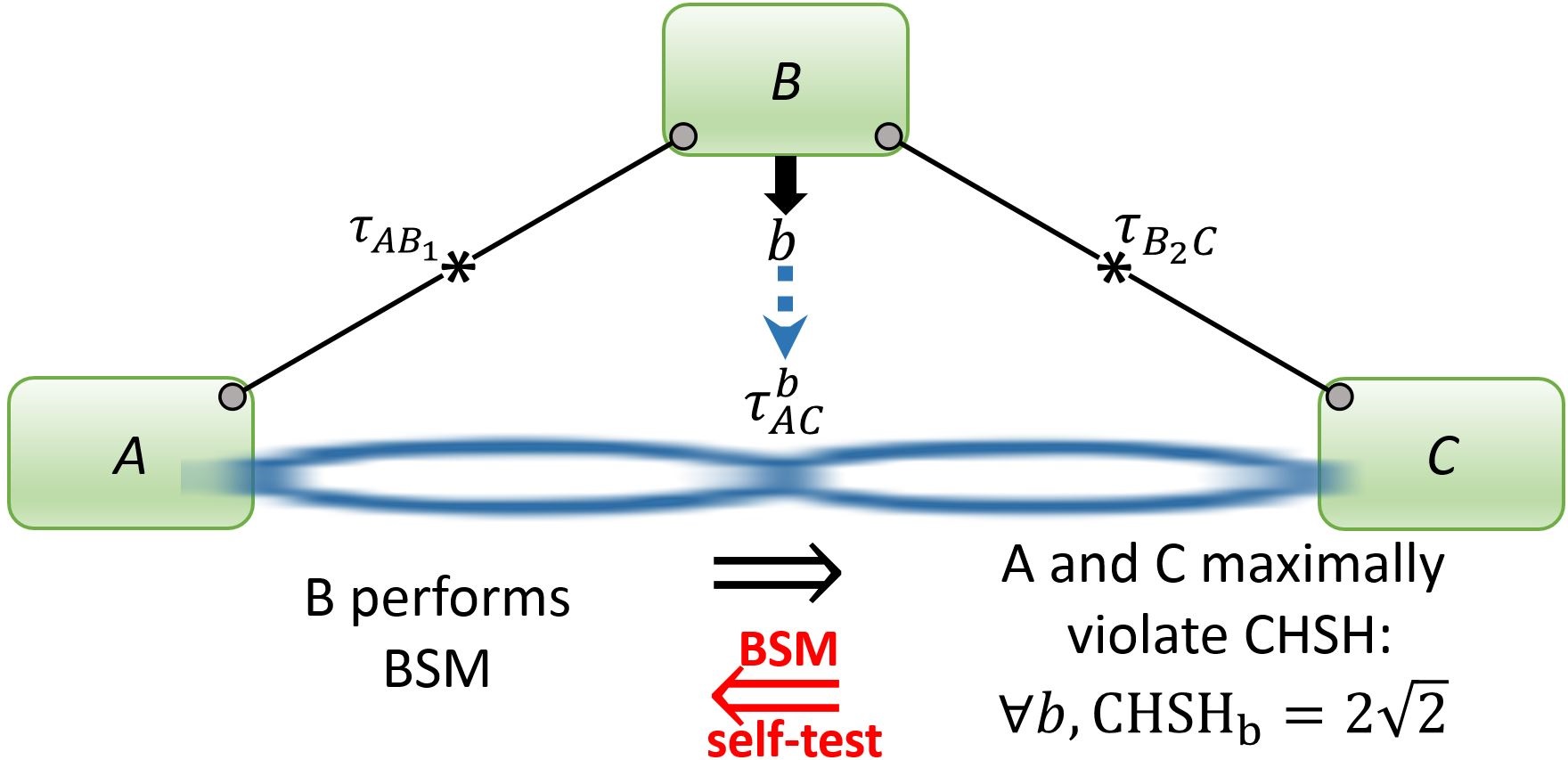}
\caption{We consider an entanglement swapping scenario for self-testing the Bell-state measurement (BSM). It is well-known that, by performing the BSM, Bob can maximally entangle the two systems of Alice and Charlie, that were initially uncorrelated. Hence Alice and Charlie can observe maximal violation of the CHSH Bell inequality. Here we prove the converse statement, namely that the observation of $\CHSH_b=2\sqrt{2}$ for all $b$ (see text) necessarily implies that Bob's measurement is equivalent to the BSM. In the second part of the paper, we show that this result can be made robust to noise.
}
\label{Protocole}
\end{figure}


\emph{Self-testing the Bell-state measurement.---}Let us start by presenting a simple procedure for self-testing the BSM. The four Bell states (maximally entangled two-qubit states) are given by
\begin{align*}
&\ket{\Phi^0} := \ket{\phi^+} = \frac{\ket{00}+\ket{11}}{\sqrt{2}},
&\ket{\Phi^1} := \ket{\phi^-} = \frac{\ket{00}-\ket{11}}{\sqrt{2}},\\
&\ket{\Phi^2} := \ket{\psi^+} = \frac{\ket{01}+\ket{10}}{\sqrt{2}},
&\ket{\Phi^3} := \ket{\psi^-} = \frac{\ket{01}-\ket{10}}{\sqrt{2}}
\end{align*}
and the BSM corresponds to $\Phi= (\Phi^{b} )_{b = 0}^{3}$ with $\Phi^b :=\ketbraq{\Phi^b}$.

Our certification procedure relies on the task of entanglement swapping \cite{ekert}, see Fig. 1. The goal is to generate entanglement between two initially uncorrelated parties (Alice and Charlie) by using an additional party (Bob) who is independently entangled with each of them. Specifically, Alice and Bob share a maximally entangled state $\ket{\phi^+}_{AB_1}\in \cH_{A}\otimes\cH_{B_1}$, and similarly Bob and Charlie share $\ket{\phi^+}_{B_2C}\in \cH_{B_2}\otimes\cH_{C}$. 
When Bob performs the BSM and obtains outcome $b$, the state of Alice and Charlie is projected to $\Phi^{b}_{ AC}$. That is, for each outcome $b$, Alice and Charlie now share one of the four Bell states. If the outcome $b$ is communicated to (say) Alice she can apply a local unitary operation on her qubit, so that she now shares with Charlie a specific Bell state.

Our self-testing procedure is based on the observation that for every outcome of Bob, the conditional state shared between Alice and Charlie can be self-tested and, moreover, we can choose their local measurements to be independent of $b$. If Alice and Charlie perform the local measurements $A_{0} := \sigma_{z}$, $A_{1} := \sigma_{x}$, $C_{0} := ( \sigma_{z} + \sigma_{x} ) / \sqrt{2}$, $C_{1} := ( \sigma_{z} - \sigma_{x} ) / \sqrt{2} $, their statistics \emph{conditioned on} $b$ will maximally violate \emph{some} CHSH inequality. More specifically, we will observe $\CHSH_{b} = 2\sqrt{2}$, where

\begin{align*}
\CHSH_{0}&:=\mean{A_0 C_0}+\mean{A_0 C_1}+\mean{A_1 C_0}-\mean{A_1 C_1},\\
\CHSH_{1}&:= \mean{A_0 C_0}+\mean{A_0 C_1}-\mean{A_1 C_0}+\mean{A_1 C_1},\\
\CHSH_2&:=-\CHSH_{1}, \quad \CHSH_3:=-\CHSH_{0}.
\end{align*}
It turns out that observing these statistics necessarily implies that Bob performs a BSM, according to the definition given in Eq.~\eqref{eq:exact-definition-bipartite}.

\begin{theorem}
\label{thm:self-testing-BSM-exact}
\BSMtheorem{0}
\end{theorem}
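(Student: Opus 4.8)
The plan is to self-test the pair of conditional states shared by Alice and Charlie and then to \emph{pull back} the resulting qubit-extraction maps through the two sources onto Bob's systems. Recall that, conditioned on Bob's outcome $b$, Alice and Charlie are left with the normalised state
\[
p_{b}\,\tau_{AC}^{b} = \Trr{B_{1}B_{2}}{( \unit_{AC} \otimes B_{B_{1}B_{2}}^{b} )( \tau_{AB_{1}} \otimes \tau_{B_{2}C} )}.
\]
Since the settings of Alice and Charlie are \emph{fixed} (independent of $b$) and yield $\CHSH_{b}=2\sqrt{2}$ for every outcome, the CHSH self-test applies to all four conditional states \emph{through one and the same pair of extraction maps}. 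Concretely, the extractability formulation of the self-test (with $g(2\sqrt{2})=1$, so fidelity $1$) furnishes completely positive trace-preserving maps $\Gamma_{A}:\cL(\cH_{A})\to\cL(\cH_{A'})$ and $\Gamma_{C}:\cL(\cH_{C})\to\cL(\cH_{C'})$ with $\abs{A'}=\abs{C'}=2$ such that $(\Gamma_{A}\otimes\Gamma_{C})(\tau_{AC}^{b})=\Phi_{A'C'}^{b}$ for all $b\in\{0,1,2,3\}$. The $b$-independence of $\Gamma_{A},\Gamma_{C}$ is the essential structural input: it is exactly what allows a single pair of Bob-side maps to work for every outcome.

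Next I would transport these maps into Bob's laboratory. Push the extraction through each source by setting $\sigABdef{}$ and $\sigBCdef{}$. Summing the self-testing relation over $b$ and using the completeness relations $\sum_{b}B_{B_{1}B_{2}}^{b}=\unit$ and $\sum_{b}\Phi_{A'C'}^{b}=\unit$, together with the maximally mixed marginals of the Bell states and the trace-preservation of $\Gamma_{A},\Gamma_{C}$, one finds $\Gamma_{A}(\tau_{A})=\unit_{A'}/2$ and $\Gamma_{C}(\tau_{C})=\unit_{C'}/2$; hence the marginals $\sigma_{A'}$ and $\sigma_{C'}$ are maximally mixed. I then define $\Lambda_{B_{1}}$ as the completely positive map whose Choi operator is the square-root-normalised state $(\sigma_{A'}^{-1/2}\otimes\unit_{B_{1}})\,\sigma_{A'B_{1}}\,(\sigma_{A'}^{-1/2}\otimes\unit_{B_{1}})$, and symmetrically $\Lambda_{B_{2}}$ from $(\unit_{B_{2}}\otimes\sigma_{C'}^{-1/2})\,\sigma_{B_{2}C'}\,(\unit_{B_{2}}\otimes\sigma_{C'}^{-1/2})$. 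Because $\sigma_{A'}=\unit_{A'}/2$ this Choi operator is simply $2\sigma_{A'B_{1}}\ge 0$, so $\Lambda_{B_{1}}$ is completely positive, and its $B_{1}$-marginal equals $\unit_{A'}$, so it is unital; explicitly $\Lambda_{B_{1}}(Y)=2\,\Trr{B_{1}}{\sigma_{A'B_{1}}(\unit_{A'}\otimes Y^{T})}$. The transpose here is forced by complete positivity: dropping it yields a map that is completely positive only for PPT (hence unentangled) sources.

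It remains to verify the defining identity $(\Lambda_{B_{1}}\otimes\Lambda_{B_{2}})(B_{B_{1}B_{2}}^{b})=\Phi_{A'C'}^{b}$. Expanding the left-hand side with the explicit forms of $\Lambda_{B_{1}},\Lambda_{B_{2}}$ and the definitions of $\sigma_{A'B_{1}},\sigma_{B_{2}C'}$, one recognises the entanglement-swapping identity: contracting Bob's operator against the two source states reproduces, up to the qubit-extraction maps $\Gamma_{A}\otimes\Gamma_{C}$, the steered operator on $A'C'$. Comparing with the conditional-state definition above and invoking $(\Gamma_{A}\otimes\Gamma_{C})(\tau_{AC}^{b})=\Phi_{A'C'}^{b}$, the left-hand side collapses to $\Phi_{A'C'}^{b}$. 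The two factors of $2$ and the conditional probability $p_{b}$ cancel precisely because the extracted marginals are maximally mixed, and the transpose introduced by the Choi construction cancels the transpose inherent in the swapping map, the residual transpose acting trivially on the real, symmetric Bell projectors $\Phi^{b}$.

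The conceptual heart of the argument is the first step, namely that Alice and Charlie may use $b$-independent settings, so a single CHSH self-test certifies all four conditional states at once and one fixed pair $(\Gamma_{A},\Gamma_{C})$—and hence one fixed pair $(\Lambda_{B_{1}},\Lambda_{B_{2}})$—serves every outcome. The main obstacle is then not a deep estimate but the simultaneous reconciliation of three demands on the pulled-back maps: complete positivity, unitality, and the exact operator identity for all $b$. Complete positivity dictates the transposed (Choi) form; unitality and the correct normalisation rest on first establishing that $\sigma_{A'},\sigma_{C'}$ are maximally mixed (a consequence of POVM completeness and the self-tested marginals); and matching the operator identity requires the Choi transpose to cancel the swapping transpose, which it does by the real/symmetric structure of the Bell basis. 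Carrying out this bookkeeping is where the work lies; in the robust version each exact equality is subsequently replaced by its fidelity-based counterpart $F(\,\cdot\,,\Phi^{b})\ge g(\beta_{b})$.
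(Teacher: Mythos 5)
Your proposal is correct and follows essentially the same route as the paper: a single pair of $b$-independent extraction channels certifies all four conditional states at once, the channels are pushed through the two sources to define $\sigma_{A'B_1}$ and $\sigma_{B_2C'}$, unitality follows from POVM completeness together with the maximally mixed marginals of the extracted Bell states, and the identity $(\Lambda_{B_1}\otimes\Lambda_{B_2})(B^b_{B_1B_2})=\Phi^b_{A'C'}$ is the Choi/entanglement-swapping rewriting of $(\Gamma_A\otimes\Gamma_C)(\tau^b_{AC})=\Phi^b_{A'C'}$ (with the transpose absorbed by the reality of the Bell projectors and the factor $4p_b=1$ supplied by the uniform marginals). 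The only part you defer — that the one fixed pair $(\Gamma_A,\Gamma_C)$ yields the correctly labelled Bell state for each $b$ — is exactly what the paper establishes via its swap-isometry lemma and the relabelling symmetries of the $\CHSH_b$ inequalities, so no gap remains.
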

While a complete proof is given in Appendix~B, we only briefly sketch the argument here. From now on, it is important to distinguish the ideal system (denoted with primes) from the real system (without primes). Let $p_{b}$ be the probability of Bob observing the outcome $b$ and let $\tau_{AC}^b$ be the normalized state between Alice and Charlie conditioned on that particular outcome, i.e.
\taubdef
Since every conditional state $\tau_{AC}^b$ maximally violates some CHSH inequality, the standard self-testing result \cite{bardyn,kaniewski16b} tells us that for each $b$ there exist local extraction channels that produce a maximally entangled state of two qubits. In fact, since the extraction channels are always constructed from the local observables which do not depend on $b$, there exists a single pair of extraction channels $\Gamma_{A} : \cL( \cH_{A} ) \to \cL( \cH_{A'} ), \Gamma_{C} : \cL( \cH_{C} ) \to \cL( \cH_{C'} )$, which always produces the ``correct'' maximally entangled state, i.e.
\begin{equation}
\label{eq:perfect-extraction}
( \Gamma_{A} \otimes \Gamma_{C} ) \big( \tau_{AC}^b \big) =
\Phi_{A'C'}^{b}.
\end{equation}
Since applying these extraction channels commutes with the measurement performed by Bob, we can formally construct the state $\sigABdef{}$.
Since this is a positive semidefinite operator satisfying $\sigma_{A'} = \unit/2$, it can be rescaled to become the Choi state of a unital map from $\cL( \cH_{B_{1}} )$ to $\cL( \cH_{A'} )$.
More specifically, we choose the Choi state of $\Lambda_{B_{1}}$ to be $2 \sigma_{ A' B_{1} }\tran$, where $\tran$ denotes the transpose in the standard basis. Similarly, we define $\sigBCdef{}$ and choose the Choi state of $\Lambda_{B_{2}}$ to be $2 \sigma_{ B_{2} C' }\tran$.
The final result of the theorem follows from a straightforward computation, in which we show that the state $\sigma_{A'C'}^b$ shared between Alice and Charlie after Bob measured $B^b_{B_1B_2}$ is by definition the image of $B^b_{B_1B_2}$ by the renormalized Choi map associated to $\sigma_{A'B_1}^T\otimes\sigma_{B_2C'}^T$ (see Appendix F).

Before discussing self-testing of the BSM in the noisy case, we present two natural generalizations of Theorem~1.

\emph{Generalizations.---}The first extension is a self-testing of the ``tilted'' Bell-state measurement (tilted BSM), featuring four partially entangled two-qubit states as eigenstates
\begin{align*}
&\ket{\phi^+_\theta}=c_\theta\ket{00}+s_\theta\ket{11},
&\ket{\phi^-_\theta}=s_\theta\ket{00}-c_\theta\ket{11},\\
&\ket{\psi^+_\theta}=c_\theta\ket{01}+s_\theta\ket{10},
&\ket{\psi^-_\theta}=s_\theta\ket{01}-c_\theta\ket{10},
\end{align*}
where $0<\theta\leq\pi/4$ and $c_\theta=\cos{\theta}$, $s_\theta=\sin{\theta}$. The self-test is again based on entanglement swapping, with initially two shared maximally entangled states. The difference is that Bob's measurement now prepares partially entangled states for Alice and Charlie, which they can self-test \cite {yang,bamps} via the maximal violation of the tilted CHSH inequalities \cite{tilted}; see Appendix~C for details.  

Our second generalization is for a three-qubit entangled measurement, with eight eigenstates given by the GHZ states
\begin{align*}
&\ket{\GHZ_0^\pm}=\frac{\ket{000}\pm\ket{111}}{\sqrt{2}},
&\ket{\GHZ_1^\pm}=\frac{\ket{011}\pm\ket{100}}{\sqrt{2}},\\
&\ket{\GHZ_2^\pm}=\frac{\ket{101}\pm\ket{010}}{\sqrt{2}},
&\ket{\GHZ_3^\pm}=\frac{\ket{110}\pm\ket{001}}{\sqrt{2}}.
\end{align*}
The self-testing procedure involves a star network of 4 observers. The central node (Rob) shares a maximally entangled state with each of the three other observers. For each of the 8 measurement outcomes, Rob's measurement prepares a GHZ state shared by the three other observers, which can be self-tested \cite{mckague,kaniewski16b} via the maximal violation of the Mermin Bell inequalities \cite{mermin}; see Appendix~C for details.

\emph{Robust self-testing of the Bell-state measurement.---}So far, we have shown that the BSM can be self-tested in the noiseless case, i.e. when Alice and Charlie observe the maximal CHSH violation. However, from a practical point of view, it is of course crucial to investigate whether such a result can be made robust to noise. In this section, we derive a noise-robust version of Theorem 1. 

Recall that given the ideal measurement $\sP$ acting on $\cH_{A'}$ and the real measurement $\sF$ acting on $\cH_{A}$ we say that the real measurement $\sF$ is capable of simulating the ideal measurement $\sP$ if there exists a completely positive unital map $\Lambda : \cL( \cH_{A} ) \to \cL( \cH_{A'}
 )$ such that $\Lambda( F_{A}^{j} ) = P_{A'}^{j}$ for all $j$. Since in the device-independent setting one cannot certify non-projective measurements, let us from now on assume that $\sP$ is a projective measurement and we define the \emph{quality of $\sF$ as a simulation of $\sP$} as
\begin{equation*}
\Qunipartite,
\end{equation*}
where $\abs{A'}$ is the dimension of the ideal Hilbert space $\cH_{A'}$,  $\ave{\cdot,\cdot}$ is the Hilbert-Schmidt inner product and the maximization is taken over completely positive unital maps $\Lambda : \cL( \cH_{A} ) \to \cL( \cH_{A'}
 )$. The quantity $Q(\sF, \sP)$ is well-defined as long as $\sF$ and $\sP$ have the same number of outcomes and $Q( \sF, \sP) \in [0, 1]$ (see Appendix~D). Moreover, since $Q( \sF, \sP) = 1$ iff $\sF$ is capable of simulating $\sP$, it is justified to think of $Q$ as a measure of how good the simulation is. This definition naturally generalizes to the case where $\sF$ and $\sP$ act jointly on two subsystems as
\begin{align*}
Q( \sF, \sP ) :=& \frac{1}{\abs{B_{1}'} \cdot \abs{B_{2}'}}\cdot\\ &\max_{ \Lambda_{B_{1}}, \Lambda_{B_{2}} } \sum_{j = 1}^{d} \ave[\big]{ ( \Lambda_{B_{1}} \otimes \Lambda_{B_{2}} ) ( F_{ B_{1} B_{2} }^{j} ), P_{ B_{1}' B_{2}' }^{j} }
\end{align*}
where the maximization is taken over completely positive unital maps $\Lambda_{X} : \cL( \cH_{X} ) \to \cL( \cH_{X'} )$ for $X= B_1, B_2$. Since we are interested in certifying entangled measurements, we assume that the ideal measurement $\sP$ contains at least one entangled operator. The threshold value $Q_{\textnormal{sep}}(\sP)$, above which we can conclude that the real measurement is entangled, is simply the largest value of $Q$ achievable when the real measurement is separable, i.e.
\begin{equation*}
\Qsepdef,
\end{equation*}
where $\cM_{\textnormal{sep}}$ is the set of separable measurements acting on $\cH_{ B_{1}' } \otimes \cH_{ B_{2}' }$. Since $\sP$ contains some entangled measurement operators, we have $\Qsep(\sP) < 1$ and clearly exceeding this threshold guarantees that at least one measurement operator of $\sF$ is entangled. For the special case of rank-1 projective measurements a simple to evaluate upper bound on $\Qsep(\sP)$ can be derived in terms of the Schmidt coefficients of the measurement operators (see Appendix~D). For the BSM this bound turns out to be tight and we conclude that $\Qsep(\Phi) = \frac{1}{2}$.

Let us now state the robust version of Theorem~\ref{thm:self-testing-BSM-exact} and sketch the proof.

\begin{theorem}
\label{thm:self-testing-BSM-robust}
\BSMtheorem{1}
\end{theorem}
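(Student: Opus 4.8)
The plan is to follow the construction behind Theorem~\ref{thm:self-testing-BSM-exact} and to quantify the errors introduced by noise. First I would feed each conditional state into the robust CHSH self-test. The extraction channels $\Gamma_{A}, \Gamma_{C}$ are built solely from Alice's and Charlie's observables, which are identical for every outcome $b$, so a single pair of channels serves all $b$; robust self-testing \cite{kaniewski16b} then gives $F\big( (\Gamma_{A}\otimes\Gamma_{C})(\tau_{AC}^{b}), \Phi_{A'C'}^{b}\big) \ge g(\beta_{b})$, where $F$ denotes the overlap with the target Bell state. Since $g$ is affine, I average over $b$ with weights $p_{b}$: writing $f_{b}$ for the individual overlaps, $\sum_{b} p_{b} f_{b} \ge \sum_{b} p_{b} g(\beta_{b}) = g(\betaave) = q$. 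Thus the only role of the hypothesis $\betaave > 2$ is to guarantee an average extraction fidelity of at least $q$.

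Next I would turn $\Gamma_{A}, \Gamma_{C}$ into the unital simulation maps required by the definition of $Q$. Setting $\sigABdef{}$, its marginal $\sigma_{A'} = \Gamma_{A}(\tau_{A})$ is in general no longer $\unit/2$, so the exact-case recipe of rescaling $\sigma_{A'B_{1}}\tran$ is not unital. The fix is to renormalise: I would take the Choi operator of $\Lambda_{B_{1}}$ proportional to $(\sigma_{A'}^{-1/2}\otimes\unit)\,\sigma_{A'B_{1}}\,(\sigma_{A'}^{-1/2}\otimes\unit)$ (transposed), whose $A'$-marginal is $\unit_{A'}$, and likewise build $\Lambda_{B_{2}}$ from $\sigma_{B_{2}C'}$. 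Repeating the Choi-state bookkeeping and the transpose trick exactly as in the exact case yields $Q(\sB,\Phi) = \tfrac14 \sum_{b} p_{b}\, \trj[\tilde\Phi^{b}\rho_{b}]$, where $\rho_{b} = (\Gamma_{A}\otimes\Gamma_{C})(\tau_{AC}^{b})$, $\tilde\Phi^{b} = W\Phi^{b}W$ and $W := \sigma_{A'}^{-1/2}\otimes\sigma_{C'}^{-1/2}$; equivalently $\{ p_{b} W\rho_{b}W \}$ is a four-outcome POVM and $Q$ measures how well it reproduces the BSM. It is precisely this renormalisation that spoils the clean exact-case identity and introduces the dependence on the marginals.

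The third step is to control the deviation of $\sigma_{A'}, \sigma_{C'}$ from $\unit/2$; I parametrise the qubit eigenvalues as $(1\pm v)/2$. A bipartite state with overlap $f$ with a maximally entangled state has marginals whose eigenvalue spread is at most $2\sqrt{f(1-f)}$, the extremal case being a pure, optimally aligned state. Since $\sigma_{A'} = \sum_{b} p_{b}\,\rho_{b,A'}$, the triangle inequality, this spread bound, and Jensen's inequality applied to the concave map $t\mapsto\sqrt{t(1-t)}$ give $\tfrac{v}{2} \le \sum_{b} p_{b}\sqrt{f_{b}(1-f_{b})} \le \sqrt{q(1-q)}$, hence $v\le\eta^{*}$ with $\etastardef$. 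This is exactly the interval $[0,\eta^{*}]$ over which the final bound is minimised, and the same argument bounds $\sigma_{C'}$.

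Finally I would lower-bound $\tfrac14\sum_{b} p_{b}\,\trj[\tilde\Phi^{b}\rho_{b}]$ subject to $\sum_{b} p_{b} f_{b} \ge q$ and the global constraint $\sum_{b} p_{b}\rho_{b} = \sigma_{A'}\otimes\sigma_{C'}$, which follows from completeness $\sum_{b}B^{b}_{B_1B_2}=\unit$. In the representative case where $W$ is diagonal in the computational basis the problem decouples along parity sectors: $W$ preserves the even-parity sector $\mathrm{span}\{\ket{00},\ket{11}\}$, containing $\ket{\phi^{\pm}}$, and the odd-parity sector $\mathrm{span}\{\ket{01},\ket{10}\}$, containing $\ket{\psi^{\pm}}$. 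On the odd sector $W$ acts as a scalar, so the $\psi^{\pm}$ outcomes contribute the clean term $\tfrac{4}{1-v^{2}}f_{b}$; on the even sector $W$ mixes $\ket{\phi^{+}}$ and $\ket{\phi^{-}}$, so the $\phi^{\pm}$ outcomes acquire cross terms that cannot be bounded outcome by outcome and must instead be handled through the global constraint. I expect this even sector to be the crux: one is left with a small semidefinite optimisation whose geometry is fixed by $v$, the general orientation of the marginals being absorbed into the worst case, and the asymmetric accounting of Alice's and Charlie's marginals is what produces the prefactor $\tfrac{1}{2(1+\eta^{*})}$ together with the bracket $\tfrac{2q-1}{\sqrt{1-v^{2}}} + \tfrac{1}{1+v}$. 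Since $v$ is known only to lie in $[0,\eta^{*}]$, taking the worst case over it yields the stated minimisation.
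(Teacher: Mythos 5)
Your first half tracks the paper: a single pair of extraction channels built from the ($b$-independent) observables, the linearity of $g$ giving an average extracted fidelity $q=g(\betaave)$, and the bound $\eta\leq\eta^{*}=2\sqrt{q(1-q)}$ on the non-uniformity of $\sigma_{A'},\sigma_{C'}$. One caveat there: the marginal--fidelity trade-off (the paper's Lemma~\ref{lem:fidelity-marginal-tradeoff}) only holds for fidelity $\geq\frac{1}{2}$, and the individual $f_{b}$ are not guaranteed to exceed $\frac{1}{2}$ (only the average is), so your outcome-by-outcome ``spread bound plus triangle inequality plus Jensen'' is not valid as stated. The paper avoids this by first forming the single state $\sigma_{A'C'}'=\sum_{b}p_{b}(\unit\otimes U_{C'}^{b})^{\dagger}\sigma_{A'C'}^{b}(\unit\otimes U_{C'}^{b})$, which has fidelity $\geq q$ with $\Phi^{0}$ and the \emph{same} $A'$-marginal as $\sigma_{A'}$, and then applies the trade-off once to that averaged state.

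The decisive gap is in the second half. You choose the symmetric square-root rescaling for \emph{both} Choi states, which reduces the problem to lower-bounding $\frac{1}{4}\sum_{b}p_{b}\trj[(W\Phi^{b}W)\rho_{b}]$ with $W=\sigma_{A'}^{-1/2}\otimes\sigma_{C'}^{-1/2}$ --- and then you stop, correctly observing that the even-parity sector produces cross terms and conjecturing that ``a small semidefinite optimisation'' will yield the stated bracket. That optimisation is exactly the hard part, and it is never carried out; moreover your symmetric construction gives no reason to expect the \emph{asymmetric} form $\frac{1}{2(1+\eta^{*})}\min_{v}[\frac{2q-1}{\sqrt{1-v^{2}}}+\frac{1}{1+v}]$, whose asymmetry in the paper comes precisely from treating the two sides differently. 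The paper's mechanism, absent from your proposal, is: (i) the operator inequality of Lemma~\ref{lem:rescaling-marginal}, $\mu_{AB}\geq s(\eta)\nu_{AB}-t(\eta)\frac{\unit}{2}\otimes\nu_{B}$ with $s(\eta)=2/\sqrt{1-\eta^{2}}$ and $t(\eta)=4/\sqrt{1-\eta^{2}}-4/(1+\eta)$, applied to the square-root-rescaled Choi state on Alice's side only; and (ii) the additive unital completion $\lambda_{B_{2}C'}\tran=\frac{2}{1+\eta_{C'}}\sigma_{B_{2}C'}+\sigma_{B_{2}}\otimes(\unit-\frac{2}{1+\eta_{C'}}\sigma_{C'})\geq\frac{2}{1+\eta_{C'}}\sigma_{B_{2}C'}$ on Charlie's side. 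These two operator inequalities reduce everything to the already-known quantities $\ave{\sigma_{A'B_{1}}\otimes\sigma_{B_{2}C'},\Phi^{b}\otimes B^{b}}\geq p_{b}g(\beta_{b})$ and $\ave{\frac{\unit}{2}\otimes\sigma_{B_{1}}\otimes\sigma_{B_{2}C'},\Phi^{b}\otimes B^{b}}=\frac{p_{b}}{2}$, with no residual SDP to solve. Without this (or a completed substitute for your sector-decoupled optimisation), the stated bound is not established.
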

The final bound, plotted as a function of $\betaave$ in Fig.~\ref{fig:robust-bsm}, is non-trivial for $\betaave \gtrsim \betacrit$ (corresponding to $\sim 5 \% $ of noise) which certifies that the measurement is entangled. As the proof is rather technical, we give a brief overview below, but defer a formal argument to Appendix~E.
\begin{figure}
	\centering
	\includegraphics[scale=0.94]{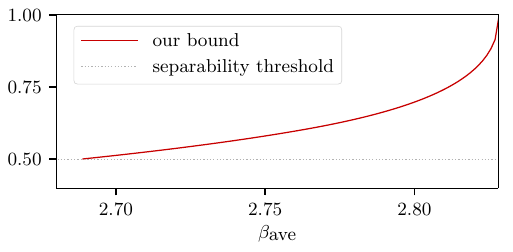}
	\caption{Lower bound on the quality of the unknown measurement proven in Theorem~\ref{thm:self-testing-BSM-robust}. The average CHSH violation of $\betaave \gtrsim \betacrit$ certifies that the measurement is entangled.}
	\label{fig:robust-bsm}
\end{figure}
The proof follows the argument given for the exact case until Eq.~\eqref{eq:perfect-extraction}, which in the noisy case must be replaced by an approximate statement. The standard construction of extraction channels \cite{kaniewski16b} yields channels $\Gamma_{A}, \Gamma_{C}$ such that the fidelity between the extracted state and the corresponding Bell state satisfies
\extractabilitylb{}
for all $b $. 
As before we define $\sigABdef{}$ and $\sigBCdef{}$, which allows us to write
\begin{align}
p_{b} F ( ( \Gamma_{A} &\otimes \Gamma_{C} ) ( \tau_{AC}^b ), \Phi_{A'C'}^b )\nonumber\\
&= p_{b} \ave{( \Gamma_{A} \otimes \Gamma_{C} ) ( \tau_{AC}^b ), \Phi_{A'C'}^b }\nonumber\\
&= \ave{ \sigma_{A' B_{1}} \otimes \sigma_{B_{2} C'}, \Phi_{A'C'}^b \otimes B_{ B_{1} B_{2} }^{b} } \geq p_{b} g( \beta_{b}).\label{eq:noisyextraction}
\end{align}
As the marginals of $\sigma_{A'}$ and $\sigma_{C'}$ are no longer guaranteed to be uniform, $\sigma_{A'B_1},\sigma_{B_2C'}$  cannot be rescaled to become Choi states of unital channels. A more complicated construction yields $\lambda_{A'B_{1}}\tran$ and $\lambda_{B_{2} C'}\tran$ with uniform marginals on subsystems $A'$ and $C'$, but their closeness to $\sigma_{A'B_{1}}$ and $\sigma_{B_{2} C'}$ depends on the bias of the marginals $\sigma_{A'}$ and $\sigma_{C'}$. Fortunately, this bias can be estimated from the observed Bell violation. Applying the unital channels corresponding to $\lambda_{A'B_{1}}$ and $\lambda_{B_{2} C'}$ yields
\begin{align*}
\ave{ ( \Lambda_{ B_{1} } \otimes \Lambda_{ B_{2} } ) ( B_{ B_{1} B_{2} }^{b}& ) , \Phi_{A'C'}^b }\\ &= \ave[\big]{ \lambda_{A' B_{1}}\tran \otimes \lambda_{B_{2} C'}\tran , \Phi_{A'C'}^b \otimes B_{ B_{1} B_{2} }^{b} },
\end{align*}
which we can relate to the observed Bell violation through Eq.~\eqref{eq:noisyextraction}. In Appendix E.2 we explain how the same approach can be used to derive robust results for the GHZ measurement discussed before.

\ifthenelse{\equal{\shortversion}{1}}
{}{}

\section*{Appendix A: The formal swap isometry}

In this appendix, we introduce the formal swap gate $S_{X,Z}$ and swap channel $\Gamma_{X,Z}$ defined for two operators $X, Z$ of a Hilbert space $\cH$. Note that these gates perform a swap only under some conditions (e.g. anti-commutation on the support of an input state) given in Lemma~\ref{LemmaSwap} stated below. In the following, $\cH'$ is a qubit Hilbert space, $S_{X,Z}$ maps any state $\ket{\psi'}\otimes\ket{\psi}\in\cH'\otimes\cH$ into $\cH'\otimes\cH$ (we write again $S_{X,Z}$ the corresponding maps over density matrices) and $\Gamma_{X,Z}$ maps any operator $\rho\in\mathcal{L}(\cH$) into $\mathcal{L}(\cH')$. 
This two transformations are introduced in Figure~\ref{Swap} and read
\begin{align}
S_{X,Z}(\ket{0}\ket{\psi})&=\frac{1}{2}\left[\ket{0}(\unit+Z)\ket{\psi}+\ket{1} X(\unit-Z)\ket{\psi} \right],\label{defS0}\\
S_{X,Z}(\ket{1}\ket{\psi})&=\frac{1}{2}\left[\ket{0}X(\unit+Z)\ket{\psi}+\ket{1}(\unit-Z)\ket{\psi} \right],\label{defS1}\\
\Gamma_{X,Z}(\rho)&=\Trr{\cH}{S_{X,Z}(\ketbra{0}{0}\otimes\rho)}.
\end{align}
This formal swap idea was already introduced in \cite{ST2} to self-test states. Here, we introduce a slightly different operator which simplifies the formulation of Lemma~\ref{LemmaSwap}.

Let $X', Z'$ be the usual Pauli matrices over the qubit space $\cH'$. We have:
\begin{lemma}\label{LemmaSwap}
Assume that $X^2=Z^2=\unit$ and $X,Z$ anti-commute over the support of a state $\ket{\psi}\in\cH$. 
Acting with $X$ (resp. $Z$) before applying $S_{X,Z}$ is equivalent to an action of $X'$ (resp.  $Z'$) after applying $S_{X,Z}$, i.e.
\begin{align}
S_{X,Z}\cdot X\cdot\ket{{\psi'}}\otimes\ket{\psi}={X'}& \cdot S_{X,Z}\cdot\ket{{\psi'}}\otimes\ket{\psi}\label{SwapEq1},\\
S_{X,Z}\cdot Z\cdot\ket{{\psi'}}\otimes\ket{\psi}={Z'}& \cdot S_{X,Z}\cdot\ket{{\psi'}}\otimes\ket{\psi}\label{SwapEq2}.
\end{align}
The conjugation of $X, Z$ with $X$ (resp. $Z$) in the definition of $S_{X,Z}$, which maps $X$ into $-X$ (resp. $Z$ into $-Z$) is equivalent to an action of $X'\otimes X$ (resp. $Z'\otimes Z$) after applying $S_{X,Z}$, i.e.
\begin{align}
S_{-X,Z}\cdot \ket{{\psi'}}\otimes\ket{\psi}=Z'\otimes Z&\cdot S_{X,Z}\cdot \ket{{\psi'}}\otimes\ket{\psi}\label{SwapEq3},\\
S_{X,-Z}\cdot \ket{{\psi'}}\otimes\ket{\psi}=X'\otimes X&\cdot S_{X,Z}\cdot \ket{{\psi'}}\otimes\ket{\psi}\label{SwapEq4}.
\end{align}
\end{lemma}

\begin{figure}
\center
\includegraphics[scale=0.3]{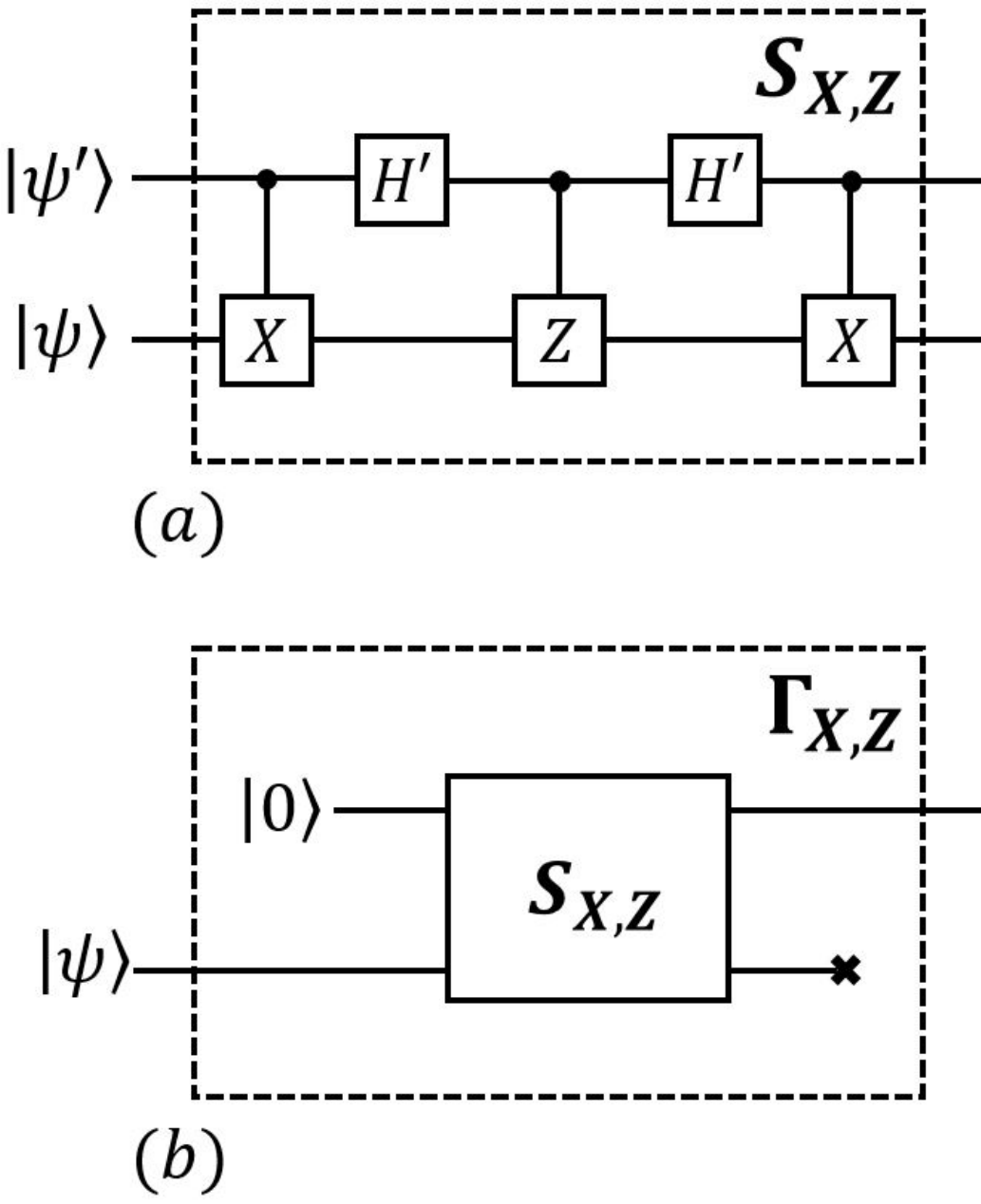}[b]
\caption{(a) Swap gate $S_{X,Z}$ constructed out of two operators $X$, $Z$ which anti-commute over the support of a state $\ket\psi\in\cH$. $S_{X,Z}$ has two entries, a qubit $\ket{\psi'}\in\cH'$ and a state $\ket{\psi}\in\cH$. $H'$ is the Hadamard gate. (b)  Swap isometry $\Gamma_{X,Z}$. It corresponds to the swap gate in which the qubit $\ket{\psi'}$ is initialized to $\ket{0}$ and the output state in $\cH$ is traced out.}\label{Swap}
\end{figure}

\begin{proof}
By linearity, we can restrict ourselves to $\ket{{\psi'}}\in\{\ket{0},\ket{1}\}$.
Then, this can directly obtained from Eq.~(\ref{defS0})~and~(\ref{defS1})~by adding $X$ or $Z$ in front of $\ket{{\psi'}}\otimes\ket{\psi}$ or substituting $X$ into $-X$ or $Z$ into $-Z$ and using the anti-commutation rules.
\end{proof}

Remark that Lemma~\ref{LemmaSwap} still holds when $\ket{\psi}$ is replaced by a density matrix, possibly defined over a larger Hilbert space.

\section*{Appendix B: Self-testing of the Bell-state measurement}\label{appendix_tiltedBS}
We now come to the proof of the main theorem of our letter. Let us introduce the formal Pauli matrices
\begin{align*}
Z_A&=A_0, &X_A&=A_1,\\
Z^*_C&=\frac{C_0+C_1}{\sqrt{2}}, &X_C^*&=\frac{C_0-C_1}{\sqrt{2}},\\
Z_C&=\mathrm{r}(Z^*_C)|\mathrm{r}(Z^*_C)|^{-1}, &X_C&=\mathrm{r}(X^*_C)|\mathrm{r}(X^*_C)|^{-1},
\end{align*} 
where for a Hermitian operator $O^*$, $O=\mathrm{r}(O^*)$ is the regularized operator i.e. the same operator in which all zero eigenvalues have been replaced by 1.
We have:
\begin{theorem}\label{thm:self-testing-BSM-exact}
\BSMtheorem{0}
\end{theorem}
The proof is in two steps. 
We first prove that the Hilbert spaces of both Alice and Charlie can be replaced by qubit Hilbert spaces. In that case, after Bob's measurement, Alice and Charlie share one of the four Bell states. 
Then we choose the Choi state of $\Lambda_{B_{1}}$ and $\Lambda_{B_{2}}$ to be proportional to the shared state between Alice/Bob and Bob/Charlie, and show that they satisfy the desired properties in order to extract the BSM.

\begin{step}\label{step1CHSH}
Let $\Gamma^0_A=\Gamma_{X_A,Z_A}$ and $\Gamma^0_C=\Gamma_{X_C,Z_C}$.
We introduce the reduced states
\begin{align*}
\sigma_{A'B_1}:&=\Gamma^0_{A}(\tau_{AB_1}), ~~~~~~~~
\sigma_{B_2C'}:=\Gamma^0_{C}(\tau_{B_2C}),\\
\sigma^b_{A'C'}:&=\left(\Gamma^0_{A}\otimes \Gamma^0_{C}\right)(\tau^b_{AC})\\
&=4\Tr{B^b_{B_1B_2} \left( \sigma_{A'B_1}\otimes\sigma_{B_2C'}\right)    },
\end{align*}
where $\tau^b_{AC}$ is the state shared between Alice and Charlie after Bob measured $b\in\{0, \cdots, 3\}$.
Then, we have
\begin{equation}\label{apeendixequationstep2}
\sigma^b_{A'C'}=\Phi^b_{A'C'}.
\end{equation}
\end{step}
\begin{proof}
We first prove the case $b=0$, which corresponds to the self-test of a maximally entangled state of two qubits \cite{ST2}. We briefly sketch it here for completeness.
After Bob measured $b=0$, the state $\tau^0_{AC}$ maximally violates the $\CHSH_0$ inequality.
Hence $Z_A, X_A$ and $ Z_C, X_C$ anti-commute and square to identity over the support of $\tau^0_{AC}$ (e.g. see \cite{bamps}). Hence we can apply Lemma~\ref{LemmaSwap} respectively to $(Z_A, X_A)$ and $(Z_C, X_C)$ which introduces the two qubit Hilbert spaces $\cH_{A'}$, $\cH_{C'}$ and the maps 
\begin{align*}
S^0_A&=S_{X_A,Z_A}, &S^0_C&=S_{X_C,Z_C},\\
 \Gamma^0_A&=\Gamma_{X_A,Z_A}, &\Gamma^0_C&=\Gamma_{X_C,Z_C}.
\end{align*}
We write $\cH_{AC}:=\cH_{A}\otimes\cH_{C}$ and $\cH_{A'C'}:=\cH_{A'}\otimes\cH_{C'}$.
Let 
\begin{equation}
W_0=A_0C_0+A_0C_1+A_1C_0-A_1C_1
\end{equation} 
be the Bell operator acting over $\cH_{AC}$ and 
\begin{equation}
W'_0=A'_0C'_0+A'_0C'_1+A'_1C'_0-A'_1C'_1
\end{equation} 
the ideal Bell operator acting over $\cH'_{AC}$. 
Eq.~(\ref{SwapEq1})~and~(\ref{SwapEq2}) show that 
\begin{equation*} S^0_A\otimes S^0_C(W_0\cdot\ketbra{00}{00}\otimes\tau^0_{AC})=W'_0\cdot S^0_A\otimes S^0_C(\ketbra{00}{00}\otimes\tau^0_{AC}).
\end{equation*}
 Hence $\sigma^0_{A'C'}=\left(\Gamma^0_{A}\otimes \Gamma^0_{C}\right)(\tau^0_{AC})$ maximally violates CHSH$_0$.
It is straightforward to show that the eigenvalue $2\sqrt{2}$ of the operator $W'_0$ is non degenerated, with associated eigenvector $\Phi^0_{A'C'}$. 
Hence $S^0_A\otimes S^0_C(\ketbra{00}{00}\otimes\tau^0_{AC})$ is a product state between $\cH_{AC}$ and $\cH_{A'C'}$, and $\sigma^0_{A'C'}=\Phi^0_{A'C'}$.
\\

Let us now prove Step~\ref{step1CHSH} for $b=1$, the other cases being similar. Post selecting the statistics over $b=1$, we have a maximal violation of $\CHSH_1$. Hence, as $\CHSH_1$ is linked to $\CHSH_0$ by the relabeling $A_1\rightarrow -A_1$, considering 
\begin{align*}
S^1_A&=S_{-X_A,Z_A}, &S^1_C&=S^0_C,\\
\Gamma^1_A&=\Gamma_{-X_A,Z_A}, &\Gamma^1_C&=\Gamma^0_C,
\end{align*}
 we can exploit the proof for $b=0$ (with $X_A$ replaced by $-X_A$), which gives that $S^1_A\otimes S^1_C(\ketbra{00}{00}\otimes\tau^1_{AC})$ is a product state and 
\begin{equation*}
\left(\Gamma^1_{A}\otimes \Gamma^1_{C}\right)(\tau^1_{AC})=\Phi^0_{A'C'}.
\end{equation*}
With Lemma~\ref{LemmaSwap}, we have
\begin{align*}
\sigma^1_{A'C'}&=\left(\Gamma^0_{A}\otimes \Gamma^0_{C}\right)(\tau^1_{AC})\\
&=\Trr{\cH_{AC}}{\left( S^0_{A}\otimes S^0_{C}\right)(\ketbra{00}{00}\otimes\tau^1_{AC})}\\
&=Z'_A\Trr{\cH_{AC}}{Z_A\left( S^1_{A}\otimes S^1_{C}\right)(\ketbra{00}{00}\otimes\tau^1_{AC})Z_A}Z'_A\\
&=Z'_A\Phi^0_{AC}Z'_A=\Phi^1_{A'C'}.
\end{align*}
\end{proof}
\begin{step}\label{step2_CHSH}
Let $\Lambda_{B_1}:\cL(\cH_{B_1})\rightarrow\cL(\cH_{A'})$ and $\Lambda_{B_2}:\cL(\cH_{B_2})\rightarrow\cL(\cH_{C'})$ be respectively the Choi-Jamio{\l}kowski maps associated to the operator $2\sigma_{A'B_1}$ and $2\sigma_{B_2C'}$. These maps are unital and
\begin{equation}
\Lambda_{B_1}\otimes\Lambda_{B_2} (B^b_{B_1B_2})=\Phi^b_{A'C'},
\end{equation}
which proves Theorem~\ref{thm:self-testing-BSM-exact}.
\end{step}

\begin{proof}
$\Lambda_{B_1}$ is unital iff it maps $\unit_{B_1}$ to $\unit_{A'}$. 
By definition of the Choi-Jamio{\l}kowski map, we have
\begin{align*}
\Lambda_{B_1}(\unit_{B_1})&=\Trr{B_1}{\sigma_{A'B_1}}\\
&=\Trr{B_1B_2C'}{\sigma_{A'B_1}\otimes\sigma_{B_2C'}}\\
&=\sum_b\Trr{B_1B_2C'}{B^b_{B_1B_2}\sigma_{A'B_1}\otimes\sigma_{B_2C'}}\\
&=\frac{1}{4}\sum_b \Trr{C'}{\sigma^b_{A'C'}}=\unit_{A'},
\end{align*}
where we used that $\sum_b B^b_{B_1B_2}=\unit$ and $\sigma^b_{A'C'}=\Phi^b_{A'C'}$.

Moreover, according to the definition of the Choi-Jamio{\l}kowski isomorphism (see Appendix~\ref{appendix_choi}) , the last statement is equivalent to Equation~\ref{apeendixequationstep2}: we find $\Lambda_{B_1}\otimes\Lambda_{B_2} (B^b_{B_1B_2})=4\Trr{B}{B^b_{B_1B_2}\sigma_{A'B_1}\otimes\sigma_{B_2C'}}=\Phi^b_{A'C'}$. 
\end{proof}

\section*{Appendix C: Generalization}

This result of Theorem 1 can be generalized to other entangled measurements. A common way to self-test a state is to construct extraction channels (here $\Gamma^0_A, \Gamma^0_C$) out of the local measurement operators.
As we show in the following, it is often possible self-test a family of states which form a basis of the considered Hilbert space by relabeling the measurement operators. If the corresponding extraction channels are all linked together in a specific way, the proof of Theorem 1 can be generalized to self-test the measurement in the corresponding basis. 
Here, we show explicitly that this is the case for the tilted BSM and the GHZ measurement.

\section{Tilted Bell-state measurement}

The proof of our main result can be extended to the case of the tilted Bell-state measurement (tilted BSM), which is a measurement in the basis:
\begin{align*}
&\ket{\phi^+_\theta}=c_\theta\ket{00}+s_\theta\ket{11},
&\ket{\phi^-_\theta}=s_\theta\ket{00}-c_\theta\ket{11},\\
&\ket{\psi^+_\theta}=c_\theta\ket{01}+s_\theta\ket{10},
&\ket{\psi^-_\theta}=s_\theta\ket{01}-c_\theta\ket{10},
\end{align*}
where $0<\theta\leq\pi/4$ and $c_\theta=\cos{\theta}$, $s_\theta=\sin{\theta}$. Remark that for $\theta=\pi/4$, we recover the usual Bell 
states. We call them $\Phi_\theta^b$, for $b=0,\cdots, 3$, keeping the same ordering.
We consider again an entanglement swapping scenario in which Alice/Bob and Bob/Charlie share a maximally entangled state $\phi^+$, Bob now performs the tilted BSM. Alice and Charlie perform the ideal local measurements 
$A'_{0} := \sigma_{z}$, $A'_{1} := \sigma_{x}$, $C'_{0} := ( \cos\mu\sigma_{z} + \sin\mu\sigma_{x} ) / \sqrt{2}$, $C'_{1} := ( \cos\mu \sigma_{z} - \sin\mu\sigma_{x} ) / \sqrt{2} $ where $\tan(\mu)=\sin(2\theta)$. Then, their statistics conditioned on $b$ will maximally violate a version of the tilted CHSH inequality.
For $b=0$, we have that 
\begin{align*}
\CHSH_0^\eta&:=+\eta \mean{A_0}\\+&\mean{A_0 C_0}+\mean{A_0 C_1}+\mean{A_1 C_0}-\mean{A_1 C_1},
\end{align*}
where $\eta=2/\sqrt{1+2\mathrm{tan}^2(2\theta)}$. The other variants are $\CHSH_b^\eta$, obtained with the symmetries introduced for the $\CHSH$ case: $\CHSH_1^\eta$ is obtained from $\CHSH_0^\eta$ with $A_1\rightarrow-A_1$, $\CHSH_2^\eta:=-\CHSH_1^\eta$ and $\CHSH_3^\eta:=-\CHSH_0^\eta$.

The formulation of the self-testing result and the proof can directly be deduced from the $\CHSH$ case, where the anticommuting operators are defined in \cite{bamps}.

\section{GHZ measurement}

The GHZ measurement features 8 eigenstates, given by the eight GHZ states
\begin{align*}
&\ket{\GHZ^{0,\pm}}=\frac{\ket{000}\pm\ket{111}}{\sqrt{2}},
&\ket{\GHZ^{A,\pm}}=\frac{\ket{011}\pm\ket{100}}{\sqrt{2}},\\
&\ket{\GHZ^{B,\pm}}=\frac{\ket{101}\pm\ket{010}}{\sqrt{2}},
&\ket{\GHZ^{C,\pm}}=\frac{\ket{110}\pm\ket{001}}{\sqrt{2}}.
\end{align*}
Note that here we use for convenience the labels $0,A,B,C$. In the ideal protocol, Alice, Bob and Charlie independently share maximally entangled states $\ket{\phi^+}$ with a central party Rob and measure ${A'}_0={B'}_0={C'}_0={X'}$ and ${A'}_1={B'}_1={C'}_1={Y'}$. Rob measures in the $\GHZ$ basis. The considered Bell expression is the Mermin Inequality, of maximal violation 4, give by
\begin{equation}
\Mer^{0,+}:=\mean{A_0 B_0 C_0}-\mean{A_0 B_1 C_1}-\mean{A_1 B_0 C_1}-\mean{A_1 B_1 C_0}.
\end{equation}
The other used symmetries of the Mermin inequality are $\Mer^r$ for $r=(P,\pm)$ with $P\in \{0, A, B, C\}$. More precisely, $\Mer^{A,+}$ is obtained from $\Mer^{0,+}$ with $A_1\rightarrow -A_1$ and similarly for $\Mer^{B,+}$, $\Mer^{C,+}$, and $\Mer^{P,-}:=-\Mer^{P,+}$.
In this ideal scenario, there is a maximal violation of the inequality $\Mer^r$ conditioned on Rob result $r=(P,\pm)$.
We first introduce the formal Pauli matrices

\begin{align*}
X_A&=A_0, &X_B&=B_0, &X_C&=C_0,\\
Y_A&=A_1, &Y_B&=B_1, &Y_C&=C_1,\\
Z_A&=-i X_A Y_A, &Z_B&=-i X_B Y_B, &Z_C&=-i X_C Y_C.
\end{align*}
We have the following theorem:
\setcounter{step}{0}
\setcounter{theorem}{2}
\begin{theorem}\label{theoremGHZselftest}
Let the initial state shared by Alice, Bob, Charlie and Rob be of the form
\begin{equation*}
\tau = \tau_{A R_{A}} \otimes \tau_{B R_{B}} \otimes \tau_{C R_{C}}
\end{equation*}
and let $\cR := ( R_{R_{A}R_{B}R_{C}}^r)_{r=(P,\pm)}$ be an eight-outcome measurement acting on $\cH_{R_{A}} \otimes \cH_{R_{B}} \otimes \cH_{R_{C}}$.
If there exist measurements for Alice, Bob, Charlie and Rob such that the resulting statistics conditioned on $r$ exhibit the maximal violation of the $\Mer^{r}$ inequality, then there exist completely positive and unital maps $\Lambda_{R_{A}} : \cL( \cH_{ R_{A} } ) \to \cL( \cH_{A'} ), \Lambda_{R_{B}} : \cL( \cH_{ R_{B} } ) \to \cL( \cH_{B'} ), \Lambda_{R_{C}} : \cL( \cH_{ R_{C} } ) \to \cL( \cH_{C'} )$ for $\abs{ A' } = \abs{ B' } = \abs{ C' } = 2$ such that
\begin{equation*}
\big( \Lambda_{R_{A}} \otimes \Lambda_{R_{B}} \otimes \Lambda_{R_{C}} \big) \big( R_{ R_{A} R_{B} R_{C} }^{r}\big) =\GHZ_{A'B'C'}^{r}
\end{equation*}
for $r=(P,\pm)$ with $P \in \{0, A, B, C\}$.
\end{theorem}
The proof is in similar to the previous one, in two steps.
We introduce the notation $r_0=(+,0)$.
\begin{step}\label{Step1GHZ}
For $P=A,B,C$, let $\Gamma^{r_0}_P=\Gamma_{X_P,Z_P}$. We introduce the reduced states
\begin{align*}
\sigma_{PR_P}:&=\Gamma^{r_0}_{P}(\tau_{PR_P}),\\
\sigma^{r}_{A'B'C'}:&=\left(\Gamma^{r_0}_{A}\otimes\Gamma^{r_0}_{B}\otimes \Gamma^{r_0}_{C}\right)(\tau^{r}_{AC})\\&=8\Tr{R^{r}_{R_AR_BR_C} \left( \sigma_{A'R_A}\otimes\sigma_{B'R_B}\otimes\sigma_{C'R_C}\right)},
\end{align*}
where $\tau^{r}_{AC}$ is the state shared between Alice, Bob and Charlie after Rob measured $r$.
Then, we have:
\begin{equation}\label{apeendixequationstep2_GHZ}
\sigma^{r}_{A'B'C'}=\GHZ^{r}_{A'B'C'}
\end{equation}
\end{step}
\begin{proof}
For result $r=r_0$, the proof is similar to the $\CHSH$ case. Considering the square of the Bell operator associated to $\Mer^{r_0}$, on can show that for any party $P$, the formal Pauli matrices $X_P, Y_P$ anti-commutes and square to identity over the support of $\tau$. This implies that $X_P, Z_P$ anti-commutes and square to identity over the support of $\tau$. Moreover, in the rest of the derivation, $Y_P$ can always be replaced with $-iZ_PX_P$ over the support of $\tau$. Then, the maximal violation of $\Mer^{r_0}$ can be used to prove that $\left(S^{r_0}_A\otimes S^{r_0}_B\otimes S^{r_0}_C\right)(\ketbra{000}{000}\otimes\tau^0_{ABC})$ is a product state between $\cH_{ABC}:=\cH_{A}\otimes \cH_{B}\otimes\cH_{C}$ and $\cH_{A'B'C'}:=\cH_{A'}\otimes \cH_{B'}\otimes\cH_{C'}$, and 
\begin{equation*}
\left(\Gamma^{r_0}_{A}\otimes\Gamma^{r_0}_{B}\otimes \Gamma^{r_0}_{C}\right)(\tau^{r_0}_{ABC})=\GHZ^{r_0}_{A'B'C'}.
\end{equation*}

Let us now prove Step~\ref{Step1GHZ} for any $r=(P,\pm)$. For $P=0,A,B,C$, we introduce the operators 
\begin{align*}
T^{P,+}&=X_P, &T^{P,-}&=T^{P,+}Z_AZ_BZ_C,\\ {T'}^{P,+}&=X'_P, &{T'}^{P,-}&={T'}^{P,+}Z'_AZ'_BZ'_C,
\end{align*}
where $T^{r_0}=\unit$. 
A straightforward calculation shows that for $x=0,1$, $\Mer^r$ is formally linked to $\Mer^{r_0}$ by the transformation $A_x\rightarrow T^r A_x T^r=\epsilon^A_{r,x} A_x$ with the anti-commutation rule $X_AZ_A=-Z_AX_A$, which define a sign $\epsilon^A_{r,x}=\pm 1$, and similarly transformation for $B, C$ (which define $\epsilon^B_{r,y}, \epsilon^C_{r,z}=\pm 1$ for $y,z=0,1$).
Hence, considering 
\begin{align*}
S^{r}_P&=S_{\epsilon^P_{r,t}X_P,\epsilon^P_{r,t}Z_P}, &\Gamma^{r}_P&=\Gamma_{\epsilon^P_{r,t}X_P,\epsilon^P_{r,t}Z_P}
\end{align*}
for $P=0,A,B,C$ and $t=0,1$, we can exploit the proof for $r=(0,+)$ to obtain 
\begin{equation*}
\left(\Gamma^r_{A}\otimes\Gamma^r_{B}\otimes \Gamma^r_{C}\right)(\tau^r_{ABC})=\GHZ^{r_0}_{A'B'C'}.
\end{equation*}
Then with Lemma~\ref{LemmaSwap}, basic computations similar to the $\CHSH$ case show that 
\begin{align*}
\left(\Gamma^{r_0}_{A}\otimes\Gamma^{r_0}_{B}\otimes \Gamma^{r_0}_{C}\right)(\tau^r_{ABC})&=T'^r\left(\Gamma^r_{A}\otimes\Gamma^r_{B}\otimes \Gamma^r_{C}\right)(\tau^r_{ABC}){T'}^r\\&=T'^r\GHZ^{r_0}_{A'B'C'}T'^r\\&=\GHZ^r_{A'B'C'}.
\end{align*}
\end{proof}

\begin{step}
For $P\in\{A,B,C\}$, let $\Lambda_{R_P}:\cL(\cH_{R_P})\rightarrow\cL(\cH_{P'})$ be the Choi-Jamio{\l}kowski map associated to the operator $2\sigma_{PR_P}$. This map is unital and
\begin{equation}
\Lambda_{R_A}\otimes\Lambda_{R_B}\otimes\Lambda_{R_C}(R^r_{R_AR_BR_C})=\GHZ_{A'B'C'}^{r}
\end{equation}
for $r=(P,\pm)$ with $P \in \{0, A, B, C\}$, which proof Theorem~\ref{theoremGHZselftest}.
\end{step}
\begin{proof}
The proof is exactly similar to the one of Step~\ref{step2_CHSH}. of Theorem~\ref{thm:self-testing-BSM-exact}. We first prove unitality and then show that the final statement is no more than a rewriting of Eq.~(\ref{apeendixequationstep2_GHZ}).
\end{proof}

\section{Further generalization}

The two previous examples demonstrate that the method used to self-test the BSM can be generalized to other entangled measurements on qubits. We expect that our method directly generalizes to a basis created out of $N$-qubit or higher dimensional systems.
Our proof relies on two steps. First, a self-test result of a state $\ket{\Phi^0}$ in which a tensor product of local extraction map $\Gamma^0$ is constructed out of the measurement operators of the parties.
Second, symmetries which allow to self-test a full basis of state $\ket{\Phi^r}$ with the same measurement operators and such that the corresponding extraction maps $\Gamma^r$ can be linked to $\Gamma^0$ in a proper way.

We expect that recent work developed methods for self-testing different classes of states, such as \cite{supic}, can be exploited to generalize our result.

\ifthenelse{\equal{\shortversion}{1}}
{\end{comment}}{}

\section*{Appendix D: Basic properties of the quality measure}
\label{app:quality-measure}
In this appendix we prove some basic properties of the quality measure defined in the main text and for completeness let us first restate the definition. We consider two measurements with $d$ outcomes: the ideal, projective measurement $\sP = ( P_{A'}^{j} )_{j = 1}^{d}$ acting on $\cH_{A'}
$ and the real (not necessarily projective) measurement $\sF = ( F_{A}^{j} )_{j = 1}^{d}$ acting on $\cH_{A}$. The quality of $\sF$ as a simulation of $\sP$ is given by
\begin{equation}
\label{eq:q-unipartite-definition}
\Qunipartite,
\end{equation}
where $\abs{A'}$ is the dimension of the Hilbert space $\cH_{A'}
$ and the maximization is taken over completely positive unital maps $\Lambda : \cL( \cH_{A} ) \to \cL( \cH_{A'}
 )$. Let us start by showing that $Q( \sF, \sP ) \in [0, 1]$ and examining the extremal cases.
\begin{prop}
We always have
\begin{equation*}
Q( \sF, \sP ) \geq \frac{1}{ \abs{A'} \cdot \abs{A} } \sum_{j = 1}^{d} \trj \big( F_{A}^{j} \big) \cdot \trj \big( P_{A'}^{j} \big).
\end{equation*}
Moreover, if the right-hand side vanishes, we must have $Q( \sF, \sP ) = 0$.
\end{prop}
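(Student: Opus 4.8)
The plan is to exploit that $Q(\sF, \sP)$ is defined as a \emph{maximum} over completely positive unital maps $\Lambda$, so that any single admissible choice of $\Lambda$ already yields a lower bound. The natural candidate is the completely depolarizing map $\Lambda_{0}(X) := \trj(X)\, \unit_{A'} / \abs{A}$, which discards all information about its input. First I would check that $\Lambda_{0}$ is admissible: it is completely positive because it has the ``trace-and-replace'' form $X \mapsto \trj(X)\, M$ with the positive semidefinite operator $M = \unit_{A'}/\abs{A}$, and it is unital since $\Lambda_{0}(\unit_{A}) = (\trj(\unit_{A}) / \abs{A})\,\unit_{A'} = \unit_{A'}$. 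These are exactly the two constraints appearing in the definition of $Q$; note that trace preservation plays no role here.

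Next I would substitute $\Lambda_{0}$ into the objective and compute term by term. Using $\ave{\unit_{A'}, P_{A'}^{j}} = \trj(P_{A'}^{j})$ together with linearity gives $\ave{\Lambda_{0}(F_{A}^{j}), P_{A'}^{j}} = \trj(F_{A}^{j})\,\trj(P_{A'}^{j})/\abs{A}$. Summing over $j$ and dividing by $\abs{A'}$ reproduces exactly the claimed right-hand side, and since $Q$ is the maximum over all admissible $\Lambda$ it is at least this particular value. This establishes the inequality.

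For the second assertion I would observe that each summand of the right-hand side is nonnegative: $F_{A}^{j} \geq 0$ forces $\trj(F_{A}^{j}) \geq 0$, and $P_{A'}^{j}$, being a projector, is positive semidefinite so $\trj(P_{A'}^{j}) \geq 0$. Hence the sum vanishes only if every product $\trj(F_{A}^{j})\,\trj(P_{A'}^{j})$ is zero, which—again using that a positive semidefinite operator with vanishing trace must itself vanish—means that for each $j$ either $F_{A}^{j} = 0$ or $P_{A'}^{j} = 0$. In either case $\ave{\Lambda(F_{A}^{j}), P_{A'}^{j}} = 0$ for \emph{every} CP unital map $\Lambda$: by linearity $\Lambda(0)=0$ in the first case, and by taking the inner product against $0$ in the second. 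Thus the objective is identically zero over the entire feasible set, so its maximum—and therefore $Q(\sF, \sP)$—equals zero.

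Since the argument is elementary, I do not expect a serious obstacle. The only points requiring a little care are verifying that the depolarizing map is simultaneously completely positive and unital, and invoking the standard fact that a positive semidefinite operator of vanishing trace is the zero operator, which is what drives the second claim.
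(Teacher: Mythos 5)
Your proposal is correct and follows essentially the same route as the paper: both use the depolarizing map $\Lambda(X) = \trj(X)\,\unit_{A'}/\abs{A}$ to obtain the lower bound, and both deduce the vanishing case from the fact that a positive semidefinite operator with zero trace is zero, so each term of the objective vanishes for every admissible map.
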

\begin{proof}
The bound comes from the map
\begin{equation*}
\Lambda(X) := \frac{\Tr X}{ \abs{A} } \cdot \unit_{A'},
\end{equation*}
which is easily checked to be completely positive and unital. If the right-hand side vanishes, we must have
\begin{equation*}
\trj \big( F_{A}^{j} \big) \cdot \trj \big( P_{A'}^{j} \big) = 0
\end{equation*}
for every $j$, which implies that for every $j$ either $F_{A}^{j} = 0$ or $P_{A'}^{j} = 0$. In such a case every term on the right-hand side of Eq.~\eqref{eq:q-unipartite-definition} must vanish regardless of the choice of $\Lambda$.
\end{proof}
Note that if $\sP$ is not only projective, but also rank-1 (which implies $d = \abs{A'}$), this lower bound simplifies to $Q(\sF, \sP) \geq \frac{1}{d}$.
\begin{prop}
\label{prop:Q-upper-bound}
We always have $Q( \sF, \sP ) \leq 1$. Moreover, if $Q( \sF, \sP ) = 1$, then there exists a completely positive unital map $\Lambda$ such that
\begin{equation*}
\Lambda ( F_{A}^{j} ) = P_{A'}^{j}
\end{equation*}
for every $j$.
\end{prop}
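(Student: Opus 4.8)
The plan is to exploit the two defining properties of the optimization variable $\Lambda$---complete positivity and unitality---to turn the images $\big( \Lambda( F_{A}^{j} ) \big)_{j=1}^{d}$ into a bona fide POVM on $\cH_{A'}$, and then to bound the overlap of this POVM with the projective measurement $\sP$. The key structural observation is that, since $\Lambda$ is completely positive and each $F_{A}^{j} \geq 0$, we have $\Lambda( F_{A}^{j} ) \geq 0$, while unitality together with $\sum_{j} F_{A}^{j} = \unit_{A}$ gives $\sum_{j} \Lambda( F_{A}^{j} ) = \Lambda( \unit_{A} ) = \unit_{A'}$. Hence $G^{j} := \Lambda( F_{A}^{j} )$ is a valid POVM on $\cH_{A'}$.

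For the upper bound I would use that each $P_{A'}^{j}$ is a projector, so $0 \leq P_{A'}^{j} \leq \unit_{A'}$, which yields the pointwise estimate
\begin{equation*}
\ave{ G^{j}, P_{A'}^{j} } = \Tr\big( G^{j} P_{A'}^{j} \big) \leq \Tr\big( G^{j} \big).
\end{equation*}
Summing over $j$ and using $\sum_{j} G^{j} = \unit_{A'}$ gives $\sum_{j} \ave{ G^{j}, P_{A'}^{j} } \leq \Tr( \unit_{A'} ) = \abs{A'}$, so dividing by $\abs{A'}$ establishes $Q( \sF, \sP ) \leq 1$. The maximum in Eq.~\eqref{eq:q-unipartite-definition} is attained because, via the Choi isomorphism, the CP unital maps between fixed finite-dimensional spaces form a compact convex set and the objective is linear, hence continuous.

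For the equality case I would analyze when this chain of inequalities is saturated. If $Q( \sF, \sP ) = 1$, let $\Lambda$ be an optimal map and $G^{j} = \Lambda( F_{A}^{j} )$. Saturation forces every pointwise inequality to be tight, i.e.\ $\Tr\big( G^{j} ( \unit_{A'} - P_{A'}^{j} ) \big) = 0$. Since both $G^{j}$ and $\unit_{A'} - P_{A'}^{j}$ are positive semidefinite, a trace-zero product of PSD operators must vanish, so $G^{j} ( \unit_{A'} - P_{A'}^{j} ) = 0$; equivalently $G^{j} = P_{A'}^{j} G^{j} = G^{j} P_{A'}^{j}$, so the support of $G^{j}$ lies in the range of $P_{A'}^{j}$. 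Because the $P_{A'}^{j}$ are orthogonal projectors summing to $\unit_{A'}$, conjugating the resolution $\sum_{k} G^{k} = \unit_{A'}$ by $P_{A'}^{j}$ annihilates every cross term and leaves $G^{j} = P_{A'}^{j}$, which is precisely $\Lambda( F_{A}^{j} ) = P_{A'}^{j}$ for all $j$.

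The main obstacle is the equality analysis rather than the bound itself: one must justify passing from the scalar condition $\Tr\big( G^{j} ( \unit_{A'} - P_{A'}^{j} ) \big) = 0$ to the operator identity $G^{j} = P_{A'}^{j}$. The two crucial ingredients are (i) the elementary fact that for PSD operators $X, Y$ one has $\Tr( X Y ) = \lVert \sqrt{X} \sqrt{Y} \rVert_{\mathrm{HS}}^{2}$, so a vanishing trace forces $\sqrt{X} \sqrt{Y} = 0$ and hence $XY = 0$; and (ii) the mutual orthogonality of the projectors $P_{A'}^{j}$, which lets the single identity $\sum_{k} G^{k} = \unit_{A'}$ pin down each $G^{j}$ block by block. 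Everything else is routine.
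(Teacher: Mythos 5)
Your proof is correct and follows essentially the same route as the paper: complete positivity plus unitality turn the images $\Lambda(F_A^j)$ into a POVM, a pointwise estimate summed over $j$ telescopes to $\abs{A'}$, and the equality case is handled by forcing each pointwise inequality to be tight and invoking the fact that a trace-zero product of positive semidefinite operators vanishes. The only (immaterial) difference is a mirror image in the details: you bound $\ave{G^j,P^j}\leq\Tr(G^j)$ and then use orthogonality of the $P^j$ to pin down each block, whereas the paper bounds $\ave{Q^j,P^j}\leq\Tr(P^j)$, deduces $Q^j\geq P^j$, and lets $\sum_j Q^j=\sum_j P^j=\unit_{A'}$ force equality; your remark that the maximum is attained by compactness of the set of CP unital maps is a point the paper leaves implicit.
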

\begin{proof}
For a fixed map $\Lambda$ let $Q_{A'}^{j} := \Lambda( F_{A}^{j} )$. Since the map is completely positive we have $Q_{A'}^{j} \geq 0$ for all $j$ and since it is unital we have
\begin{equation*}
\sum_{j = 1}^{d} Q_{A'}^{j} = \sum_{j = 1}^{d} \Lambda( F_{A}^{j} ) = \Lambda(\unit_{A}) = \unit_{A'},
\end{equation*}
which together implies that $Q_{A'}^{j} \leq \unit_{A'}$ for every $j$. Therefore,
\begin{equation}
\label{eq:Q-upper-bound}
\frac{1}{\abs{A'}} \sum_{j = 1}^{d} \ave[\big]{ Q_{A'}^{j}, P_{A'}^{j} } \leq \frac{1}{\abs{A'}} \sum_{j = 1}^{d} \ave[\big]{ \unit_{A'}, P_{A'}^{j} }
= 1.
\end{equation}
Since this bound holds for every completely positive unital map $\Lambda$, we immediately obtain $Q( \sF, \sP ) \leq 1$.

If $Q(\sF, \sP) = 1$, there exists a map $\Lambda$ such that the resulting operators $Q_{A'}^{j}$ saturate the upper bound given in Eq.~\eqref{eq:Q-upper-bound}. This means that the equality
\begin{equation*}
\ave[\big]{ Q_{A'}^{j}, P_{A'}^{j} } = \ave[\big]{ \unit_{A'}, P_{A'}^{j} }
\end{equation*}
holds for every $j$, which implies that $Q_{A'}^{j} \geq P_{A'}^{j}$ for all $j$. Finally, the relation
\begin{equation*}
\unit_{A'} = \sum_{j = 1}^{d} Q_{A'}^{j} \geq \sum_{j = 1}^{d} P_{A'}^{j} = \unit_{A'}
\end{equation*}
forces all these inequalities to hold as equalities.
%
\end{proof}
If $\sF$ and $\sP$ act jointly on two subsystems the quality of simulation is given by
\begin{align*}
Q( \sF, \sP ) &:= \frac{1}{\abs{B_{1}'} \cdot \abs{B_{2}'}}\\
&\max_{ \Lambda_{B_{1}}, \Lambda_{B_{2}} } \sum_{j = 1}^{d} \ave[\big]{ ( \Lambda_{B_{1}} \otimes \Lambda_{B_{2}} ) ( F_{ B_{1} B_{2} }^{j} ), P_{ B_{1}' B_{2}' }^{j} }
\end{align*}
where the maximization is taken over completely positive unital maps $\Lambda_{X} : \cL( \cH_{X'} ) \to \cL( \cH_{X} )$ for $X = B_{1}, B_{2}$. Proposition~\ref{prop:Q-upper-bound} straightforwardly extends to the bipartite setting, so let us state it without a proof.
\begin{prop}
\label{prop:Q-upper-bound-bipartite}
We always have $Q( \sF, \sP ) \leq 1$. Moreover, if $Q( \sF, \sP ) = 1$, then there exist completely positive unital maps $\Lambda_{B_{1}} : \cL( \cH_{B_{1}} ) \to \cL( \cH_{B_{1}'} )$ and $\Lambda_{B_{2}} : \cL( \cH_{B_{2}} ) \to \cL( \cH_{B_{2}'} )$ such that
\begin{equation*}
\exdefbi
\end{equation*}
for every $j$.
\end{prop}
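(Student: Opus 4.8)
The plan is to reduce the bipartite claim to the single-system argument of Proposition~\ref{prop:Q-upper-bound}; the key observation that makes the extension immediate is that the tensor product $\Lambda_{B_{1}} \otimes \Lambda_{B_{2}}$ of two completely positive unital maps is again completely positive and unital, now viewed as a map $\cL( \cH_{B_{1}} \otimes \cH_{B_{2}} ) \to \cL( \cH_{B_{1}'} \otimes \cH_{B_{2}'} )$. Once this is noted, the proof of the unipartite case applies essentially verbatim on the joint space $\cH_{B_{1}'} \otimes \cH_{B_{2}'}$, with $\abs{A'}$ replaced throughout by $\abs{B_{1}'} \cdot \abs{B_{2}'}$.

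Concretely, I would fix an admissible pair of maps and set $Q_{B_{1}'B_{2}'}^{j} := ( \Lambda_{B_{1}} \otimes \Lambda_{B_{2}} )( F_{B_{1}B_{2}}^{j} )$. Complete positivity of the two factors gives $Q_{B_{1}'B_{2}'}^{j} \geq 0$, while unitality of the tensor product---which follows from $( \Lambda_{B_{1}} \otimes \Lambda_{B_{2}} )( \unit_{B_{1}B_{2}} ) = \Lambda_{B_{1}}( \unit_{B_{1}} ) \otimes \Lambda_{B_{2}}( \unit_{B_{2}} ) = \unit_{B_{1}'B_{2}'}$---combined with the completeness relation $\sum_{j} F_{B_{1}B_{2}}^{j} = \unit_{B_{1}B_{2}}$ yields $\sum_{j} Q_{B_{1}'B_{2}'}^{j} = \unit_{B_{1}'B_{2}'}$. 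Together these force $0 \leq Q_{B_{1}'B_{2}'}^{j} \leq \unit_{B_{1}'B_{2}'}$ for every $j$, exactly as in the single-system case.

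The bound $Q( \sF, \sP ) \leq 1$ then follows from the same chain of inequalities as before: $\frac{1}{ \abs{B_{1}'} \cdot \abs{B_{2}'} } \sum_{j} \ave[\big]{ Q_{B_{1}'B_{2}'}^{j}, P_{B_{1}'B_{2}'}^{j} } \leq \frac{1}{ \abs{B_{1}'} \cdot \abs{B_{2}'} } \sum_{j} \ave[\big]{ \unit_{B_{1}'B_{2}'}, P_{B_{1}'B_{2}'}^{j} } = 1$, where the final equality uses $\sum_{j} P_{B_{1}'B_{2}'}^{j} = \unit_{B_{1}'B_{2}'}$ together with $\trj( \unit_{B_{1}'B_{2}'} ) = \abs{B_{1}'} \cdot \abs{B_{2}'}$. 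Since this holds for every admissible pair of maps, taking the maximum gives $Q( \sF, \sP ) \leq 1$.

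For the equality case I would mirror Proposition~\ref{prop:Q-upper-bound} exactly: if $Q( \sF, \sP ) = 1$ then some pair of maps saturates the bound, forcing $\ave[\big]{ \unit_{B_{1}'B_{2}'} - Q_{B_{1}'B_{2}'}^{j}, P_{B_{1}'B_{2}'}^{j} } = 0$ for each $j$. As both operators are positive semidefinite and $\sP$ is projective, this implies $Q_{B_{1}'B_{2}'}^{j} \geq P_{B_{1}'B_{2}'}^{j}$, and then $\sum_{j} Q_{B_{1}'B_{2}'}^{j} = \sum_{j} P_{B_{1}'B_{2}'}^{j} = \unit_{B_{1}'B_{2}'}$ forces every inequality to be an equality, giving $( \Lambda_{B_{1}} \otimes \Lambda_{B_{2}} )( F_{B_{1}B_{2}}^{j} ) = P_{B_{1}'B_{2}'}^{j}$. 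This is routine; the only two points worth spelling out---and the nearest thing to an obstacle---are that the tensor product of completely positive unital maps is itself completely positive and unital, and that a vanishing trace inner product $\trj( XY ) = 0$ of positive semidefinite operators $X, Y$ forces $XY = 0$ (applied with $X = \unit_{B_{1}'B_{2}'} - Q_{B_{1}'B_{2}'}^{j}$ and $Y = P_{B_{1}'B_{2}'}^{j}$). Both are standard, and everything else is mechanical.
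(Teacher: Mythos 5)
Your proof is correct and is precisely the argument the paper intends: the paper states Proposition~\ref{prop:Q-upper-bound-bipartite} without proof, remarking that Proposition~\ref{prop:Q-upper-bound} ``straightforwardly extends'' to the bipartite setting, and your writeup is exactly that extension --- observe that $\Lambda_{B_1}\otimes\Lambda_{B_2}$ is again completely positive and unital, then run the unipartite argument verbatim on $\cH_{B_1'}\otimes\cH_{B_2'}$ with normalization $\abs{B_1'}\cdot\abs{B_2'}$. Both the saturation step (positive semidefinite operators with vanishing trace inner product, plus projectivity of $\sP$, give $Q^j_{B_1'B_2'}\geq P^j_{B_1'B_2'}$) and the final summation argument match the paper's Proposition~\ref{prop:Q-upper-bound} proof exactly, so there is nothing to add.
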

The separability threshold is given by
\begin{equation*}
\Qsepdef,
\end{equation*}
where $\cM_{\textnormal{sep}}$ is the set of separable measurements acting on $\cH_{B_{1}'} \otimes \cH_{B_{2}'}$.

Proposition~\ref{prop:Q-upper-bound-bipartite} implies that if $\sP$ is an entangled measurement, we must have $\Qsep(\sP) < 1$, but computing an explicit upper bound is not entirely trivial. In the following proposition we compute an explicit upper bound for measurements composed of rank-1 projectors.
\begin{prop}
\label{prop:rank1-projective-upper-bound}
Let $\sP$ be a rank-1 projective measurement given by
\begin{equation*}
P_{B_{1}'B_{2}'}^{j} = \ketbraq{e_{j}}_{B_{1}'B_{2}'}
\end{equation*}
and let the Schmidt decomposition of $\ket{e_{j}}_{B_{1}'B_{2}'}$ be
\begin{equation*}
\ket{e_{j}}_{B_{1}'B_{2}'} = \sum_{l} \alpha_{j, l} \ket{a_{j, l}}_{B_{1}'} \ket{b_{j, l}}_{B_{2}'}.
\end{equation*}
Then,
\begin{equation*}
\Qsep(\sP) \leq \alpha_{\textnormal{max}}^{2},
\end{equation*}
where $\alpha_{\textnormal{max}} := \max_{j,l} \alpha_{j, l}$ is the largest Schmidt coefficient.
\end{prop}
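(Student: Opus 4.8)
The plan is to unfold the definition of $\Qsep(\sP)$ and bound the objective for an \emph{arbitrary} separable measurement $\sF = ( F_{B_1'B_2'}^{j} )_{j=1}^{d}$, taking the maximum only at the very end. Since $P_{B_1'B_2'}^{j} = \ketbraq{e_j}$, the quantity to control is $\sum_{j} \ave[\big]{ F_{B_1'B_2'}^{j}, P_{B_1'B_2'}^{j} } = \sum_j \bra{e_j} F_{B_1'B_2'}^{j} \ket{e_j}$. The strategy is to show that each summand is at most $\alpha_{\textnormal{max}}^2 \, \Tr{ F_{B_1'B_2'}^{j} }$, and then to use the completeness relation $\sum_j F_{B_1'B_2'}^{j} = \unit$ to turn the sum of traces into $\abs{B_1'} \cdot \abs{B_2'}$, which exactly cancels the normalising prefactor.

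First I would use that every POVM element of a separable measurement is a separable positive operator, hence decomposes into unnormalised product rank-one terms, $F_{B_1'B_2'}^{j} = \sum_k \ketbraq{\phi_{j,k}} \otimes \ketbraq{\chi_{j,k}}$. This rewrites the $j$-th summand as $\sum_k \big| \bra{e_j} \big( \ket{\phi_{j,k}} \otimes \ket{\chi_{j,k}} \big) \big|^2$, i.e.\ a sum of squared overlaps of the target eigenvector with (unnormalised) product vectors.

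The heart of the argument --- and the step I expect to be the main obstacle --- is the bound on the maximal overlap of a fixed bipartite pure state with product vectors. Writing $\alpha_{j,\textnormal{max}} := \max_l \alpha_{j,l}$ for the largest Schmidt coefficient of $\ket{e_j}$, I claim that for all unit vectors $\ket{\phi}, \ket{\chi}$ one has $\big| \bra{e_j} \big( \ket{\phi} \otimes \ket{\chi} \big) \big|^2 \leq \alpha_{j,\textnormal{max}}^2$, the standard fact that the maximal product overlap equals the largest Schmidt coefficient. This follows by expanding $\bra{e_j} \big( \ket{\phi} \otimes \ket{\chi} \big) = \sum_l \alpha_{j,l} \braket{a_{j,l}}{\phi} \braket{b_{j,l}}{\chi}$ and applying Cauchy--Schwarz together with the orthonormality of the Schmidt bases $\{ \ket{a_{j,l}} \}$ and $\{ \ket{b_{j,l}} \}$, which give $\sum_l \abs{ \braket{a_{j,l}}{\phi} }^2 \leq 1$ and $\sum_l \abs{ \braket{b_{j,l}}{\chi} }^2 \leq 1$. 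The only delicate point is the bookkeeping for the unnormalised terms: rescaling yields $\big| \bra{e_j} \big( \ket{\phi_{j,k}} \otimes \ket{\chi_{j,k}} \big) \big|^2 \leq \alpha_{j,\textnormal{max}}^2 \, \Tr{ \ketbraq{\phi_{j,k}} \otimes \ketbraq{\chi_{j,k}} }$.

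Finally I would assemble the pieces. Summing over $k$ gives $\bra{e_j} F_{B_1'B_2'}^{j} \ket{e_j} \leq \alpha_{j,\textnormal{max}}^2 \, \Tr{ F_{B_1'B_2'}^{j} } \leq \alpha_{\textnormal{max}}^2 \, \Tr{ F_{B_1'B_2'}^{j} }$, where $\alpha_{\textnormal{max}} = \max_j \alpha_{j,\textnormal{max}} = \max_{j,l} \alpha_{j,l}$. Summing over $j$ and invoking completeness, $\sum_j \Tr{ F_{B_1'B_2'}^{j} } = \Tr{ \unit_{B_1'B_2'} } = \abs{B_1'} \cdot \abs{B_2'}$, so that $\sum_j \ave[\big]{ F_{B_1'B_2'}^{j}, P_{B_1'B_2'}^{j} } \leq \alpha_{\textnormal{max}}^2 \, \abs{B_1'} \cdot \abs{B_2'}$. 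Dividing by $\abs{B_1'} \cdot \abs{B_2'}$ and maximising over separable $\sF$ yields $\Qsep(\sP) \leq \alpha_{\textnormal{max}}^2$, as claimed.
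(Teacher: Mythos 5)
Your proof is correct and follows essentially the same route as the paper's: decompose each separable POVM element into unnormalised product rank-one terms, bound the overlap of a product vector with $\ket{e_j}$ by the largest Schmidt coefficient squared, and use completeness to convert the sum of traces into $\abs{B_1'}\cdot\abs{B_2'}$. The only (immaterial) differences are that you work directly with the separable measurement on $\cH_{B_1'}\otimes\cH_{B_2'}$ as in the definition of $\Qsep$ rather than with the images $(\Lambda_{B_1}\otimes\Lambda_{B_2})(F^j)$ as the paper does, and that you spell out via Cauchy--Schwarz the maximal-product-overlap fact that the paper simply cites.
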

\begin{proof}
For fixed channels $\Lambda_{A}$ and $\Lambda_{B}$ let $Q_{B_{1}'B_{2}'}^{j} := ( \Lambda_{A} \otimes \Lambda_{B} )(F_{AB}^{j})$. These are still separable operators, i.e.~we can write them as
\begin{equation*}
Q_{B_{1}'B_{2}'}^{j} = \sum_{k} \lambda_{j, k} \ketbraq{\psi_{j, k}}_{B_{1}'B_{2}'}
\end{equation*}
for some product states $\ket{\psi_{j, k}}$. Moreover, they satisfy $\sum_{j = 1}^{d} Q_{B_{1}'B_{2}'}^{j} = \unit_{B_{1}'B_{2}'}$. Then, we obtain
\begin{align*}
\sum_{j = 1}^{d} \ave[\big]{ Q_{B_{1}'B_{2}'}^{j}, P_{B_{1}'B_{2}'}^{j} } &= \sum_{j = 1}^{d} \sum_{k} \lambda_{j, k} \abs{\braket{e_{j}}{\psi_{j, k}}}^{2}\\
&\leq \alpha_{\textnormal{max}}^{2} \sum_{j = 1}^{d} \sum_{k} \lambda_{j, k}\\
&= \alpha_{\textnormal{max}}^{2} \sum_{j = 1}^{d} \trj \big( Q_{B_{1}'B_{2}'}^{j} \big)\\
&= \alpha_{\textnormal{max}}^{2} \cdot \trj ( \unit_{B_{1}'B_{2}'} )\\
&= \alpha_{\textnormal{max}}^{2} \cdot \abs{B_{1}'} \cdot \abs{B_{2}'},
\end{align*}
where we have used the fact that the overlap between a product state and an entangled state cannot exceed the square of the largest Schmidt coefficient. Since the bound does not depend on the specific choice of maps, it holds universally.
\end{proof}
Clearly, the estimate above is rather crude, but it can be tight, e.g.~for the BSM we obtain the value of $\frac{1}{2}$ which turns out to be correct.

A tighter bound can be obtained if we take into account the individual Schmidt coefficients of the ideal projectors. Clearly,
\begin{align*}
\sum_{j = 1}^{d} \ave[\big]{ Q_{B_{1}'B_{2}'}^{j}, P_{B_{1}'B_{2}'}^{j} } &\leq \sum_{j = 1}^{d} \alpha_{j, \textnormal{max}}^{2} \sum_{k} \lambda_{j, k}\\
&= \sum_{j = 1}^{d} \alpha_{j, \textnormal{max}}^{2} \trj \big( Q_{B_{1}'B_{2}'}^{j} \big),
\end{align*}
where $\alpha_{j, \textnormal{max}} := \max_{l} \alpha_{j, l}$. In the last step we must determine the choice of traces $\trj \big( Q_{B_{1}'B_{2}'}^{j} \big)$ that maximizes the right-hand side of this expression. Let us order the outcomes such that the coefficients $\alpha_{j, \textnormal{max}}$ are non-increasing. The bound stated in Proposition~\ref{prop:rank1-projective-upper-bound} corresponds to assigning all the trace to $j = 1$. However, since each individual term is upper-bounded by $\trj P_{B_{1}'B_{2}'}^{j} = 1$, it suffices to set $\trj Q_{B_{1}'B_{2}'}^{1} = \alpha_{1, \textnormal{max}}^{-2}$ and then distribute the remaining trace over the other terms. For $j \geq 2$ the optimal choice is given by
\begin{equation*}
\trj Q_{B_{1}'B_{2}'}^{j} = \min \bigg\{ \alpha_{j, \textnormal{max}}^{-2}, \abs{B_{1}'} \cdot \abs{B_{2}'} - \sum_{k = 1}^{j - 1} \alpha_{k, \textnormal{max}}^{-2} \bigg\}.
\end{equation*}
It is easy to verify that the resulting upper bound is non-trivial as long as there are some entangled projectors and in some cases it can even be tight. 
If we choose a measurement composed of two product states and two Bell states $( \ketbraq{00}, \ketbraq{11}, \Phi^{2}, \Phi^{3} ) $, we obtain the value of $\frac{3}{4}$ which turns out to be tight.

The measure given in Eq.~\eqref{eq:q-unipartite-definition} captures how well the real measurement $\sF$ simulates the ideal measurement $\sP$. Since the measure is simply a sum over terms corresponding to all possible measurement outcomes, one might be tempted to think that in order to certify the quality of a \emph{single measurement operator}, it would suffice to look at the relevant term. 
This is, however, not quite true as shown by the following example. Suppose we want to certify that $F_{A}^{0}$ is capable of simulating a rank-1 projector $P_{A'}^{0}$.
While the upper bound
\begin{equation*}
\ave[\big]{ \Lambda ( F_{A}^{0} ), P_{A'}^{0} } \leq \ave[\big]{ \unit_{A'}, P_{A'}^{0} } = 1
\end{equation*}
still holds, saturating it does \emph{not} allow us to conclude that $\Lambda( F_{A}^{0} ) = P_{A'}^{0}$. In particular, another valid solution is given by $\Lambda( F_{A}^{0} ) = \unit_{A'}$. In order to construct a measure which is maximized iff $\Lambda( F_{A}^{0} ) = P_{A'}^{0}$ one must include an extra component, e.g.~the trace of the resulting operator. Indeed, the conditions
\begin{equation*}
\ave[\big]{ \Lambda ( F_{A}^{0} ), P_{A'}^{0} } = 1 \nbox{and} \trj \big( \Lambda ( F_{A}^{0} ) \big) = 1
\end{equation*}
are sufficient to conclude $\Lambda( F_{A}^{0} ) = P_{A'}^{0}$. In particular, this means that in the bipartite case the entanglement of a single measurement operator \emph{cannot} be inferred by looking only at $\ave{ \Lambda ( F_{B_{1} B_{2}}^{0} ), P_{B_{1}'B_{2}'}^{0} }$. For instance, if $P_{B_{1}'B_{2}'}^{0}$ is a rank-1 entangled projector, the maximal value of
\begin{equation*}
\ave[\big]{ \Lambda ( F_{B_{1} B_{2}}^{0} ), P_{B_{1}'B_{2}'}^{0} } = 1
\end{equation*}
can be achieved by a separable measurement operator $F_{B_{1} B_{2}}^{0}$, e.g.~$F_{B_{1} B_{2}}^{0} = \unit_{B_{1} B_{2}}$.
\section*{Appendix E: Noise tolerant results}
\section{Proof of Theorem~2}
\label{app:BSM-robust}
In this appendix we provide a complete proof of Theorem~2. We begin by proving three auxiliary lemmas. The first one concerns an arbitrary two-qubit state. For a Hermitian operator $X$ we denote its spectrum by $\spec(X)$.
\begin{lemma}
\label{lem:fidelity-marginal-tradeoff}
Let $\rho_{AB}$ be a two-qubit state and let $\Phi_{AB}$ some pure maximally entangled state. If $F( \rho_{AB}, \Phi_{AB} ) \geq c$ for some $c \in [\frac{1}{2}, 1]$, then
\begin{equation*}
\spec ( \rho_{A} ) \subseteq \bigg[ \frac{1 - \eta}{2}, \frac{1 + \eta}{2} \bigg]
\end{equation*}
for $\eta := 2 \sqrt{ c (1 - c) }$.
\end{lemma}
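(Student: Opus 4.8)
The plan is to reduce to a single fixed target state, bound the overlap on each pure component of $\rho_{AB}$ separately, and then glue the components together with a determinant inequality; the hypothesis on the fidelity then becomes a scalar constraint that pins down the spectrum.

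First I would remove the freedom in the choice of $\Phi_{AB}$. Every pure maximally entangled two-qubit state is of the form $(\unit_A \otimes V)\ket{\phi^+}$ for a unitary $V$ and the Bell state $\ket{\phi^+} = (\ket{00}+\ket{11})/\sqrt{2}$. Conjugating $\rho_{AB}$ by $\unit_A \otimes V^\dagger$ leaves both $F(\rho_{AB},\Phi_{AB}) = \bra{\Phi}\rho_{AB}\ket{\Phi}$ and the reduced operator $\rho_A$ unchanged while sending $\Phi_{AB}$ to $\ketbraq{\phi^+}$. A further conjugation by $U \otimes \bar U$, which fixes $\ket{\phi^+}$ and acts on the marginal as $\rho_A \mapsto U\rho_A U^\dagger$, lets me diagonalize $\rho_A = \mathrm{diag}(\lambda, 1-\lambda)$ with $\lambda \geq \tfrac12$. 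So it suffices to prove the single inequality $F \leq \tfrac12 + \sqrt{\det \rho_A} = \tfrac12 + \sqrt{\lambda(1-\lambda)}$, after which $F \geq c$ becomes a constraint on $\lambda$.

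Next I would take a spectral decomposition $\rho_{AB} = \sum_k p_k \ketbraq{\psi_k}$ and write each $\ket{\psi_k} = \sum_{ij}(C_k)_{ij}\ket{ij}$ through its coefficient matrix $C_k$. Then $\braket{\phi^+}{\psi_k} = \tfrac{1}{\sqrt2}\trj(C_k)$, so $|\braket{\phi^+}{\psi_k}|^2 \leq \tfrac12 \oneN{C_k}^2 = \tfrac12\big(\sum_l s_{k,l}\big)^2$, where the singular values $s_{k,l}$ of $C_k$ are exactly the Schmidt coefficients of $\ket{\psi_k}$. Since $\sum_l s_{k,l}^2 = 1$, this gives $|\braket{\phi^+}{\psi_k}|^2 \leq \tfrac12(1 + 2 s_{k,1}s_{k,2}) = \tfrac12 + \sqrt{\det \rho_A^{(k)}}$ with $\rho_A^{(k)} := \Trr{B}{\ketbraq{\psi_k}}$. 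Averaging yields $F = \sum_k p_k |\braket{\phi^+}{\psi_k}|^2 \leq \tfrac12 + \sum_k p_k \sqrt{\det \rho_A^{(k)}}$. The remaining step, passing from the average of $\sqrt{\det \rho_A^{(k)}}$ to $\sqrt{\det \rho_A}$, is precisely the concavity of $X \mapsto \sqrt{\det X}$ on $2\times 2$ positive operators (Minkowski's determinant inequality), which gives $\sum_k p_k \sqrt{\det \rho_A^{(k)}} \leq \sqrt{\det\big(\sum_k p_k \rho_A^{(k)}\big)} = \sqrt{\det \rho_A}$ and hence $F \leq \tfrac12 + \sqrt{\lambda(1-\lambda)}$.

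Finally, from $c \leq F \leq \tfrac12 + \sqrt{\lambda(1-\lambda)}$ I get $\sqrt{\lambda(1-\lambda)} \geq c - \tfrac12 \geq 0$; squaring and solving $\lambda^2 - \lambda + (c-\tfrac12)^2 \leq 0$ gives $|\lambda - \tfrac12| \leq \tfrac12\sqrt{1-(2c-1)^2} = \sqrt{c(1-c)} = \eta/2$, so both eigenvalues $\lambda$ and $1-\lambda$ of $\rho_A$ lie in the symmetric interval $[\tfrac{1-\eta}{2}, \tfrac{1+\eta}{2}]$, as claimed. I expect the pure-state overlap bound (a one-line trace-norm estimate) to be routine; the genuine obstacle is the mixed-state case, since the eigenvalues of the averaged marginal $\rho_A$ are not controlled termwise by those of the $\rho_A^{(k)}$. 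Invoking the concavity of the square-root determinant is exactly what closes this gap, so it is the crux of the argument.
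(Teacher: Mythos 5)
Your proof is correct, but it takes a genuinely different route from the paper. The paper phrases the problem as a semidefinite program (maximize $\ave{\sigma_z,\rho_A}$ subject to $\ave{\Phi^+_{AB},\rho_{AB}}\geq c$ and normalization) and certifies the bound $2\sqrt{c(1-c)}$ by exhibiting an explicit feasible point of the dual, namely $\lambda_1=\sqrt{c/(1-c)}$, $\lambda_2=(2c-1)/\sqrt{c(1-c)}$; this is short but the dual solution appears unmotivated and the feasibility check $\lambda_1\unit\geq\lambda_2\Phi^+_{AB}+\sigma_z\otimes\unit$ is left to the reader. You instead prove the exact scalar trade-off $F(\rho_{AB},\Phi_{AB})\leq\tfrac12+\sqrt{\det\rho_A}$ directly: the symmetry reduction via $\unit\otimes V$ and $U\otimes\bar U$ is standard and sound, the pure-state overlap bound $\abs{\braket{\phi^+}{\psi_k}}^2\leq\tfrac12(\sum_l s_{k,l})^2=\tfrac12+\sqrt{\det\rho_A^{(k)}}$ is a correct trace-norm/Schmidt-coefficient estimate, and Minkowski's determinant inequality (concavity of $X\mapsto\sqrt{\det X}$ on $2\times2$ positive operators, by superadditivity plus homogeneity) correctly handles the mixed-state averaging, which you rightly identify as the crux. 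Your inequality is in fact the exact extremal relation---it is saturated by the partially entangled pure states $\cos\theta\ket{00}+\sin\theta\ket{11}$ that the paper mentions after the lemma as witnesses of tightness---so your argument is somewhat more informative and self-contained, at the cost of being longer than quoting a dual certificate.
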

\begin{proof}
Note that if $c = 1$, then $\rho_{AB} = \Phi_{AB}$ and we necessarily have $\spec( \rho_{A} ) = \{ \frac{1}{2} \}$. For $c \in [\frac{1}{2}, 1)$ we find the trade-off by solving a semidefinite program in which we constrain the fidelity with the maximally entangled state and maximize the expectation value of some single-qubit Pauli observable. This gives the upper bound on the spectrum of $\rho_{A}$, whereas the lower bound follows from normalization. Without loss of generality we can assume the maximally entangled state to be $\ket{\Phi^{+}} = \frac{1}{\sqrt{2}} ( \ket{00} + \ket{11} )$ and the Pauli observable to be $\sigma_{z}$. Then, the primal problem reads
\begin{equation*}
\begin{array}{ll}
\nbox{maximize} &\ave{ \sigma_{z}, \rho_{A} }\\
\nbox{subject to} &\ave{ \Phi_{AB}^{+} , \rho_{AB} } \geq c,\\
&\ave{ \unit, \rho_{AB} } = 1,\\
\nbox{over} &\rho_{AB} \geq 0.
\end{array}
\end{equation*}
Computing the dual leads to
\begin{equation*}
\begin{array}{ll}
\nbox{minimize} & \lambda_{1} - \lambda_{2} c\\
\nbox{subject to} &\lambda_{1} \unit \geq \lambda_{2} \Phi^{+}_{AB} + \sigma_{z} \otimes \unit,\\
\nbox{over} &\lambda_{1} \in \amsbb{R},\\
&\lambda_{2} \geq 0.
\end{array}
\end{equation*}
For $c \in [1/2, 1)$ the assignment
\begin{equation*}
\lambda_{1} = \sqrt{ \frac{ c }{ 1 - c } } \nbox{and} \lambda_{2} = \frac{ 2c - 1 }{ \sqrt{ c ( 1 - c ) } }
\end{equation*}
constitutes a valid solution to the dual and the corresponding value equals $2 \sqrt{ c ( 1 - c ) }$.
\end{proof}
To see that this bound cannot be improved, note that it is saturated by pure partially entangled states $\ket{\psi^{\theta}} := \cos \theta \ket{00} + \sin \theta \ket{11}$ for $\theta \in [0, \pi/4]$ (the fidelity equals $c = (1 + \sin 2\theta )/2$, the spectrum of the reduced state is $\{\cos^{2} \theta, \sin^{2} \theta\}$ ).

In the second lemma we prove an operator inequality for an arbitrary qubit-qudit state.
\begin{lemma}
\label{lem:rescaling-marginal}
Let $\nu_{AB}$ be a qubit-qudit state such that
\begin{equation*}
\spec( \nu_{A} ) = \bigg\{ \frac{ 1 - \eta }{2}, \frac{ 1 + \eta }{2} \bigg\}
\end{equation*}
for some $\eta \in [0, 1)$. Moreover, define
\begin{equation*}
\mu_{AB} := \sqroot{\nu}{}.
\end{equation*}
Then, the operator inequality
\begin{equation}
\label{eq:mu-operator-inequality}
\mu_{AB} \geq s( \eta ) \nu_{AB} - t(\eta) \frac{ \unit }{2} \otimes \nu_{B}
\end{equation}
holds for
\begin{equation*}
s( \eta ) := \frac{2}{ \sqrt{ 1 - \eta^{2} } } \nbox{and} t( \eta ) := \frac{4}{ \sqrt{ 1 - \eta^{2} } } - \frac{4}{ 1 + \eta }.
\end{equation*}
\end{lemma}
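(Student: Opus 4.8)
The plan is to diagonalize $\nu_A$ and work in the resulting $2 \times 2$ block structure. Let $\ket{0}, \ket{1}$ be the eigenvectors of $\nu_A$ with eigenvalues $\lambda_+ := (1+\eta)/2$ and $\lambda_- := (1-\eta)/2$, and expand $\nu_{AB} = \sum_{i,j \in \{0,1\}} \ketbra{i}{j} \otimes M_{ij}$ where the blocks $M_{ij} := \bra{i} \nu_{AB} \ket{j}$ act on $\cH_B$. The two facts I would extract immediately are that the diagonal blocks $M_{00}, M_{11} \geq 0$ (they are compressions of the positive operator $\nu_{AB}$), and that tracing out $A$ gives the marginal $\nu_B = M_{00} + M_{11}$. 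The spectral hypothesis on $\nu_A$ enters only through fixing the two eigenvalues $\lambda_\pm$.

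Next I would compute $\mu_{AB}$ in this block basis. Since $\nu_A^{-1/2}$ acts as the scalar $\lambda_i^{-1/2}$ on the $i$-th sector, the $(i,j)$ block of $\mu_{AB}$ is simply $(\lambda_i \lambda_j)^{-1/2} M_{ij}$; in particular the diagonal blocks become $\lambda_+^{-1} M_{00}$ and $\lambda_-^{-1} M_{11}$, while both off-diagonal blocks acquire the factor $(\lambda_+ \lambda_-)^{-1/2}$. Here is the observation that drives the whole argument: since $\lambda_+ \lambda_- = (1-\eta^2)/4$, one has $(\lambda_+ \lambda_-)^{-1/2} = 2/\sqrt{1-\eta^2} = s(\eta)$, so the constant $s$ is engineered precisely to cancel the off-diagonal blocks of $\mu_{AB} - s\, \nu_{AB}$. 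As $t\, \tfrac{\unit}{2} \otimes \nu_B$ is block-diagonal (each diagonal block equal to $\tfrac{t}{2}\nu_B$), the entire operator $D := \mu_{AB} - s\, \nu_{AB} + t\, \tfrac{\unit}{2} \otimes \nu_B$ is block-diagonal, and $D \geq 0$ reduces to checking its two diagonal blocks separately.

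To finish, I would substitute $\tfrac{t}{2} = s - \lambda_+^{-1}$, which is exactly the content of the stated formula once one notes $4/(1+\eta) = 2\lambda_+^{-1}$. A one-line computation collapses the first block to
\[
(\lambda_+^{-1} - s) M_{00} + \tfrac{t}{2}(M_{00} + M_{11}) = (s - \lambda_+^{-1}) M_{11},
\]
and the second to
\[
(\lambda_-^{-1} - s) M_{11} + \tfrac{t}{2}(M_{00} + M_{11}) = (\lambda_-^{-1} - \lambda_+^{-1}) M_{11} + (s - \lambda_+^{-1}) M_{00}.
\]
Every scalar coefficient is nonnegative: $s \geq \lambda_+^{-1}$ because $\sqrt{1-\eta^2} \leq 1+\eta$ for $\eta \geq 0$, and $\lambda_-^{-1} \geq \lambda_+^{-1}$ because $\lambda_- \leq \lambda_+$. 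Combined with $M_{00}, M_{11} \geq 0$, both blocks are manifestly positive semidefinite, which proves the operator inequality \eqref{eq:mu-operator-inequality}.

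I do not expect a genuine obstacle; the real content is the structural insight that the prescribed $s$ annihilates the off-diagonal blocks and that $t$ is then tuned exactly so the diagonal blocks reduce to nonnegative combinations of $M_{00}$ and $M_{11}$. The only point requiring care is the labelling convention for $\lambda_\pm$: one should verify that swapping which eigenvalue is called $\lambda_+$ still yields positive-semidefinite blocks, but the symmetry $\nu_B = M_{00} + M_{11}$ makes the conclusion labelling-independent, so no case analysis genuinely survives.
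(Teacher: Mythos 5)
Your proof is correct and follows essentially the same route as the paper: diagonalize $\nu_A$, observe that the choice of $s(\eta)=(\lambda_+\lambda_-)^{-1/2}$ cancels the off-diagonal blocks, and verify positivity of the two diagonal blocks using $M_{00},M_{11}\geq 0$ and $\nu_B=M_{00}+M_{11}$ (the paper writes the identical computation in the Pauli basis $E_0,E_x,E_y,E_z$, where these facts read $E_0\pm E_z\geq 0$ and $\nu_B=E_0$). Your version is marginally cleaner in that the first block collapses to the exact identity $(s-\lambda_+^{-1})M_{11}$, whereas the paper bounds it from below via $E_z\leq E_0$.
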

\begin{proof}
An arbitrary qubit-qudit state can be written as
\begin{equation}
\nu_{AB} = \frac{1}{2} \big( \unit \otimes E_{0} + \sum_{j} \sigma_{j} \otimes E_{j} \big)
\end{equation}
for some Hermitian operators $E_{j}$ (the summation goes over $j \in \{x, y, z\}$). Since $\nu_{B} = E_{0}$, we must have $E_{0} \geq 0$ and $\trj ( E_{0} ) = 1$. Moreover, since the operator
\begin{align*}
\nu_{AB} &+ (\sigma_{z} \otimes \unit) \nu_{AB} (\sigma_{z} \otimes \unit)\\
&= \unit \otimes E_{0} + \sigma_{z} \otimes E_{z}\\
&= \ketbraq{0} \otimes ( E_{0} + E_{z}) + \ketbraq{1} \otimes ( E_{0} - E_{z})
\end{align*}
is positive semidefinite, we also have $E_{0} + E_{z} \geq 0$ and $E_{0} - E_{z} \geq 0$. Without loss of generality we can assume that the reduced state $\nu_{A}$ is diagonal in the computational basis, i.e.
\begin{equation*}
\nu_{A} = \frac{ \unit + \eta \sigma_{z} }{2}.
\end{equation*}
This implies that $\trj (E_{x}) = \trj (E_{y}) = 0$ and $\trj (E_{z}) = \eta$. Since $\eta \in [0, 1)$, the reduced state is full-rank and the inverse is well-defined. The square root of the inverse is given by
%
%
%
%
\begin{equation*}
\nu_{A}^{-1/2} = a(\eta) \unit - b(\eta) \sigma_{z}
\end{equation*}
for
\begin{align*}
a(\eta) := \frac{ \sqrt{ 1 + \eta} + \sqrt{ 1 - \eta } }{ \sqrt{ 2 ( 1 - \eta^{2} ) } } \nbox{and} b(\eta) := \frac{ \sqrt{ 1 + \eta} - \sqrt{ 1 - \eta } }{ \sqrt{ 2 ( 1 - \eta^{2} ) } }.
\end{align*}
\begin{widetext}
\noindent Computing $\mu_{AB}$ gives
\begin{align*}
\mu_{AB} =& \big( \nu_{A}^{-1/2} \otimes \unit \big) \nu_{AB} \big( \nu_{A}^{-1/2} \otimes \unit \big)\\
=& \frac{1}{2} \bigg( \nu_{A}^{-1} \otimes E_{0} + \frac{2}{ \sqrt{ 1 - \eta^{2} } } \Big[ \sigma_{x} \otimes E_{x} + \sigma_{y} \otimes E_{y} \Big] + \frac{2}{ 1 - \eta^{2} } ( - \eta \unit + \sigma_{z} ) \otimes E_{z} \bigg)\\
=& \unit \otimes \frac{ E_{0} - \eta E_{z} }{ 1 - \eta^{2} } + \frac{1}{ \sqrt{ 1 - \eta^{2} } } \Big[ \sigma_{x} \otimes E_{x} + \sigma_{y} \otimes E_{y} \Big] + \sigma_{z} \otimes \frac{ - \eta E_{0} + E_{z} }{ 1 - \eta^{2} }.
\end{align*}
The operator inequality~\eqref{eq:mu-operator-inequality} is equivalent to
\begin{equation*}
\mu_{AB} - s( \eta ) \nu_{AB} + t(\eta) \frac{ \unit }{2} \otimes \nu_{B} \geq 0.
\end{equation*}
Writing out the left-hand side gives
\begin{align*}
\mu_{AB} - s( \eta ) \nu_{AB} + t(\eta) \frac{ \unit }{2} \otimes \nu_{B} &= \unit \otimes \bigg( \frac{ E_{0} - \eta E_{z} }{ 1 - \eta^{2} } + \frac{ \big[ - s(\eta) + t(\eta) \big] E_{0} }{ 2 } \bigg) + \sigma_{z} \otimes \bigg( \frac{ - \eta E_{0} + E_{z} }{ 1 - \eta^{2} } - \frac{ s(\eta) E_{z} }{ 2 } \bigg)\\
&= \ketbraq{0} \otimes \bigg[ \bigg( \frac{ 1 }{ 1 + \eta } - \frac{ s(\eta) }{2} \bigg) ( E_{0} + E_{z} ) + \frac{ t(\eta) E_{0} }{2} \bigg]\\
&+ \ketbraq{1} \otimes \bigg[ \bigg( \frac{ 1 }{ 1 - \eta } - \frac{ s(\eta) }{2} \bigg) ( E_{0} - E_{z} ) + \frac{ t(\eta) E_{0} }{2} \bigg].
\end{align*}
To show positivity it suffices to analyse each block separately. Positivity of the $\ketbraq{1}$ block is clear (it is a sum of two positive semidefinite operators), but the $\ketbraq{0}$ block requires more work. Since for $\eta \in [0, 1)$ we have
\begin{equation*}
\frac{ 1 }{ 1 + \eta } - \frac{ s(\eta) }{ 2 } \leq 0,
\end{equation*}
we apply the bound $E_{z} \leq E_{0}$ to obtain
\begin{align*}
\bigg( \frac{ 1 }{ 1 + \eta } - \frac{ s(\eta) }{2} \bigg) ( E_{0} + E_{z} ) + \frac{ t(\eta) E_{0} }{2} \geq \bigg( \frac{ 2 }{ 1 + \eta } - s(\eta) + \frac{ t(\eta) }{2} \bigg) E_{0} = 0.
\end{align*}
\end{widetext}
\end{proof}
The last lemma is a simple generalization of the CHSH self-testing result from Ref.~\cite{kaniewski16b}, which shows that the same extraction channels can be used for different variants of the CHSH inequality.
\begin{lemma}
\label{lem:simultaneous-self-testing}
For $b \in \{0, 1, 2, 3\}$ let $\tau_{AC}^b$ be arbitrary normalized states acting on $\cH_{A} \otimes \cH_{C}$. For observables $A_{0}, A_{1}$ acting on $\cH_{A}$ and $C_{0}, C_{1}$ acting on $\cH_{C}$ define the Bell operators
\begin{align*}
W_{0} &:= A_{0} \otimes C_{0} + A_{0} \otimes C_{1} + A_{1} \otimes C_{0} - A_{1} \otimes C_{1},\\
W_{1} &:= A_{0} \otimes C_{0} + A_{0} \otimes C_{1} - A_{1} \otimes C_{0} + A_{1} \otimes C_{1},\\
W_{2} &:= - W_{1},\\
W_{3} &:= - W_{0}.
\end{align*}
and let $\beta_{b} := \trj (W_{b} \tau_{AC}^b)$ be the corresponding Bell value. Then, there exist quantum channels $\Gamma_{A} : \cL( \cH_{A} ) \to \cL( \cH_{A'} )$ and $\Gamma_{C} : \cL( \cH_{C} ) \to \cL( \cH_{C'} )$, where $\abs{ A' } = \abs{ C' } = 2$, such that for $b \in \{0, 1, 2, 3\}$
\extractabilitylb{,}
where
\gfunc{}
for $x^{*} := ( 16 + 14 \sqrt{2} ) / 17$.
\end{lemma}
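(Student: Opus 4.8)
The plan is to bootstrap the entire statement from the single standard-CHSH robustness bound of Ref.~\cite{kaniewski16b}. That result furnishes extraction channels $\Gamma_{A}$ and $\Gamma_{C}$, built \emph{only} from the fixed observables $A_{0},A_{1}$ and $C_{0},C_{1}$ (and not from the states), such that any state $\rho$ with standard-CHSH value $\beta := \trj(W_{0}\rho)$ obeys $F\big((\Gamma_{A}\otimes\Gamma_{C})(\rho),\Phi_{A'C'}^{0}\big)\geq g(\beta)$. Because these channels do not depend on $b$, applying the bound with $\rho=\tau_{AC}^{0}$ settles the case $b=0$ at once, and the only task left is to reuse this \emph{same} pair of channels for the other three outcomes.

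Next I would exploit the fact that the four Bell operators differ from $W_{0}$ only by signs on Alice's observables,
\begin{align*}
W_{1}&=W_{0}\big|_{A_{1}\to-A_{1}}, & W_{2}&=-W_{1}=W_{0}\big|_{A_{0}\to-A_{0}},\\
W_{3}&=-W_{0}=W_{0}\big|_{A_{0}\to-A_{0},\,A_{1}\to-A_{1}}.
\end{align*}
Hence $\beta_{b}=\trj(W_{b}\tau_{AC}^{b})$ is precisely the standard-CHSH value of $\tau_{AC}^{b}$ evaluated with the correspondingly sign-flipped Alice observables. Applying \cite{kaniewski16b} to each relabelled expression yields, for every $b$, a channel $\Gamma_{A}^{(b)}$ built from the flipped observables (together with the unchanged $\Gamma_{C}$) satisfying $F\big((\Gamma_{A}^{(b)}\otimes\Gamma_{C})(\tau_{AC}^{b}),\Phi_{A'C'}^{0}\big)\geq g(\beta_{b})$.

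The crux is to replace each $\Gamma_{A}^{(b)}$ by the single common channel $\Gamma_{A}$. The extraction channel depends equivariantly on the observables: the output Pauli $Z'$ tracks $A_{0}$ and $X'$ tracks $A_{1}$, so reversing the sign of $A_{1}$ conjugates the output by $Z'$, reversing $A_{0}$ conjugates it by $X'$, and reversing both conjugates it by $Z'X'$. This is the robust counterpart of the covariance recorded in the exact case by Eqs.~\eqref{SwapEq3}--\eqref{SwapEq4} of Lemma~\ref{LemmaSwap}; for Kaniewski's channel it can be checked using only $A_{0}^{2}=A_{1}^{2}=\unit$, so it persists in the noisy regime where the anti-commutation of the observables is lost. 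Concretely $\Gamma_{A}^{(1)}=Z'_{A}\,\Gamma_{A}(\cdot)\,Z'_{A}$, $\Gamma_{A}^{(2)}=X'_{A}\,\Gamma_{A}(\cdot)\,X'_{A}$ and $\Gamma_{A}^{(3)}=Z'_{A}X'_{A}\,\Gamma_{A}(\cdot)\,X'_{A}Z'_{A}$. Combining this with the unitary invariance of the fidelity and the identities $\Phi^{1}_{A'C'}=Z'_{A}\Phi^{0}_{A'C'}Z'_{A}$, $\Phi^{2}_{A'C'}=X'_{A}\Phi^{0}_{A'C'}X'_{A}$ and $\Phi^{3}_{A'C'}=Z'_{A}X'_{A}\Phi^{0}_{A'C'}X'_{A}Z'_{A}$ converts each bound against $\Phi^{0}$ into the desired bound $F\big((\Gamma_{A}\otimes\Gamma_{C})(\tau_{AC}^{b}),\Phi_{A'C'}^{b}\big)\geq g(\beta_{b})$ with the common channels.

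The only genuinely technical point---and where I expect to spend the most care---is verifying this covariance directly for the explicit robust channel of \cite{kaniewski16b}, rather than for the exact swap isometry $\Gamma_{X,Z}$ of Lemma~\ref{LemmaSwap}, whose analysis relies on anti-commutation. I would do this at the level of the adjoint map: $\Gamma_{A}^{\dagger}$ sends the qubit Paulis $Z',X'$ (and their product) to $A_{0},A_{1}$ (and theirs), so flipping the sign of an input observable is exactly undone by conjugating the output by the complementary Pauli. Once this equivariance is confirmed, the four fidelity bounds follow from a single pair of channels and the lemma is established.
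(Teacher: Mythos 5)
Your argument is correct and is essentially the paper's own proof: the paper likewise keeps a single pair of extraction channels and observes that the operator inequalities for $b = 1, 2, 3$ are unitarily equivalent to the $b = 0$ case, which is precisely the sign-flip/output-Pauli equivariance you spell out (cf.\ Eqs.~\eqref{SwapEq3}--\eqref{SwapEq4}). If anything, your write-up makes explicit the covariance check that the paper's two-sentence sketch leaves implicit.
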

\begin{proof}
The original result proves only the statement corresponding to $b = 0$. However, since the extraction channels depend only on the observables, one might expect that the same choice works equally well for other variants of the CHSH inequality. Indeed, if we keep precisely the same extraction channels and write down the operator inequalities corresponding to $b = 1, 2, 3$, we realize they are all unitarily equivalent to the $b = 0$ case, which leads to analogous self-testing statements.
\end{proof}
We are now ready to prove the main theorem.
\setcounter{theorem}{1}
\begin{theorem}\label{thm:self-testing-BSM-robust}
\BSMtheorem{1}
\end{theorem}
\begin{proof}
Recall that $\tau_{AC}^b$ is defined as
\taubdef
Lemma~\ref{lem:simultaneous-self-testing} guarantees the existence of completely positive trace-preserving maps $\Gamma_{A} : \cL( \cH_{A} ) \to \cL( \cH_{A'} )$ and $\Gamma_{C} : \cL( \cH_{C} ) \to \cL( \cH_{C'} )$, where $\abs{ A' } = \abs{ C' } = 2$, such that
\extractabilitylb{,}
where $\beta_{b}$ is the violation of the $\CHSH_{b}$ inequality between Alice and Charlie conditioned on Bob observing the outcome $b$. Define
\begin{align*}
\sigABdef{&},\\
\sigBCdef{&}.
\end{align*}
Since $\Phi_{A'C'}^b$ are pure states and applying the channels $\Gamma_{A}, \Gamma_{C}$ commutes with the measurement performed on $B_{1} B_{2}$ we have
\begin{align*}
p_{b} F ( ( \Gamma_{A} &\otimes \Gamma_{C} ) ( \tau_{AC}^b ), \Phi_{A'C'}^b )\\
&= p_{b} \ave{( \Gamma_{A} \otimes \Gamma_{C} ) ( \tau_{AC}^b ), \Phi_{A'C'}^b }\\
&= \ave{ \sigma_{A' B_{1}} \otimes \sigma_{B_{2} C'}, \Phi_{A'C'}^b \otimes B_{ B_{1} B_{2} }^{b} },
\end{align*}
which, in particular, implies that
\begin{equation}
\label{eq:sigma-trace-lower-bound}
\ave{ \sigma_{A' B_{1}} \otimes \sigma_{B_{2} C'}, \Phi_{A'C'}^b \otimes B_{ B_{1} B_{2} }^{b} } \geq p_{b} g( \beta_{b} ).
\end{equation}

Recall that our goal is to construct a pair of unital maps $\Lambda_{ B_{1} } : \cL( \cH_{ B_{1} } ) \to \cL( \cH_{A'} )$ and $\Lambda_{ B_{2} } : \cL( \cH_{ B_{2} } ) \to \cL( \cH_{C'} )$ for which we can prove a lower bound on
\begin{equation*}
\ave{ ( \Lambda_{ B_{1} } \otimes \Lambda_{ B_{2} } ) ( B_{ B_{1} B_{2} }^{b} ) , \Phi_{A'C'}^b }.
\end{equation*}
If $\lambda_{A' B_{1}}$ and $\lambda_{B_{2} C'}$ denote the Choi states of the maps $\Lambda_{B_{1}}$ and $\Lambda_{B_{2}}$, respectively, we have
\begin{align*}
( \Lambda_{ B_{1} } \otimes &\Lambda_{ B_{2} } ) ( B_{ B_{1} B_{2} }^{b} )\\
&= \trj_{ B_{1} B_{2} } \big[ \big( \lambda_{A' B_{1}} \otimes \lambda_{B_{2} C'} \big) \big( \unit_{A'C'} \otimes ( B_{ B_{1} B_{2} }^{b} )\tran \big) \big]
\end{align*}
and therefore
\begin{align}
\ave{ ( \Lambda_{ B_{1} } \otimes &\Lambda_{ B_{2} } ) ( B_{ B_{1} B_{2} }^{b} ) , \Phi_{A'C'}^b }\nonumber\\
&= \ave[\big]{ \lambda_{A' B_{1}} \otimes \lambda_{B_{2} C'} , \Phi_{A'C'}^b \otimes ( B_{ B_{1} B_{2} }^{b} )\tran }\nonumber\\
\label{eq:lambda-inner-product}
&= \ave[\big]{ \lambda_{A' B_{1}}\tran \otimes \lambda_{B_{2} C'}\tran , \Phi_{A'C'}^b \otimes B_{ B_{1} B_{2} }^{b} },
\end{align}
where in the second step we have used the fact that the Bell states are invariant under transposition (in the standard basis). The similarity between this expression and Eq.~\eqref{eq:sigma-trace-lower-bound} suggests that the Choi states $\lambda_{ A' B_{1} }$ and $\lambda_{ B_{2} C' }$ should be constructed from $\sigma_{A' B_{1}}$ and $\sigma_{B_{2} C'}$, respectively. The only remaining difficulty is the fact that the marginals $\sigma_{A'}$ and $\sigma_{C'}$ are not necessarily proportional to $\unit$. Let us first show how to bound the non-uniformity of these marginals from the observed Bell violations. Let us parametrize the marginal of $\sigma_{A'}$ by $\eta_{A} \in [0, 1]$ such that
\begin{equation*}
\spec( \sigma_{A'} ) = \bigg\{ \frac{ 1 - \eta_{A} }{2}, \frac{ 1 + \eta_{A} }{2} \bigg\}.
\end{equation*}
Let $\sigma_{A'C'}^b := ( \Gamma_{A} \otimes \Gamma_{C} ) ( \tau_{AC}^b )$ and note that
\begin{equation*}
F( \sigma_{A'C'}^b, \Phi_{A'C'}^b ) = \ave{ \sigma_{A'C'}^b, \Phi_{A'C'}^b } \geq g( \beta_{b} ).
\end{equation*}
Let $U_{C'}^b$ be a local unitary acting on $\cH_{C'}$ such that $\big( \unit_{A'} \otimes U_{C'}^b \big) \ket{ \Phi^{0} }_{A'C'} = \ket{ \Phi^b }_{A'C'}$ and define
\begin{equation*}
\sigma_{A'C'}' := \sum_{b} p_{b} \big( \unit_{A'} \otimes U_{C'}^b \big)^{\dagger} \sigma_{A'C'}^b \big( \unit_{A'} \otimes U_{C'}^b \big),
\end{equation*}
where the summation goes over $b \in \{0, 1, 2, 3\}$. It is easy to verify that
\begin{align}
\label{eq:fidelity-lower-bound}
\begin{split}
F( \sigma_{A'C'}', \Phi_{A'C'}^{0} ) &= \ave{ \sigma_{A'C'}', \Phi_{A'C'}^{0} } = \sum_{b} p_{b} \ave{ \sigma_{A'C'}^b, \Phi_{A'C'}^b }\\
&\geq \sum_{b} p_{b} g( \beta_{b} ) = g( \betaave ) = q,
\end{split}
\end{align}
where we have used the fact that the function $g$ is linear. Moreover,
\begin{equation*}
\sigma_{A'}' = \sum_{b} p_{b} \sigma_{A'}^b = \sigma_{A'},
\end{equation*}
which implies that the two have the same spectrum. The lower bound given in Eq.~\eqref{eq:fidelity-lower-bound} plugged into Lemma~\ref{lem:fidelity-marginal-tradeoff} implies that $\eta_{A'} \leq \eta^{*}$ for
\begin{equation*}
\etastardef.
\end{equation*}
It is easy to check that for $\betaave > 2$, we have $\eta^{*} < 1$, i.e.~the reduced state $\sigma_{A'}$ is full-rank. By symmetry the same bound applies to $\eta_{C'}$.

We are now ready to define the Choi states of the channels $\Lambda_{ B_{1} }$ and $\Lambda_{ B_{2} }$. Let
\begin{align}
\lambda_{ A' B_{1} }\tran &:= \big( \sigma_{A'}^{-1/2} \otimes \unit \big) \sigma_{A'B_{1}} \big( \sigma_{A'}^{-1/2} \otimes \unit \big),\nonumber\\
\label{eq:lambdaB2C-def}
\lambda_{ B_{2} C' }\tran &:= \frac{2}{ 1 + \eta_{C'} } \sigma_{ B_{2} C' } + \sigma_{ B_{2} } \otimes \bigg( \unit - \frac{2}{ 1 + \eta_{C'} } \sigma_{C'} \bigg),
%
%
\end{align}
which are easily verified to be valid Choi states. For this particular choice we have
\begin{align*}
\lambda_{A' B_{1}}\tran &\otimes \lambda_{B_{2} C'}\tran\\
&\geq \lambda_{A' B_{1}}\tran \otimes \frac{2}{ 1 + \eta_{C'} } \sigma_{ B_{2} C' }\\
&\geq \bigg( s( \eta_{A'} ) \sigma_{ A'B_{1} } - t( \eta_{A'} ) \frac{ \unit }{2} \otimes \sigma_{ B_{1} } \bigg) \otimes \frac{2}{ 1 + \eta_{C'} } \sigma_{ B_{2} C' },
\end{align*}
where in the first step we use the fact that the second term in Eq.~\eqref{eq:lambdaB2C-def} is positive semidefinite, while in the second step we use the operator inequality derived in Lemma~\ref{lem:rescaling-marginal}. To use this operator inequality to bound the inner product given in Eq.~\eqref{eq:lambda-inner-product}, we note that the first term can be bounded using Eq.~\eqref{eq:sigma-trace-lower-bound}, whereas the second term can be explicitly evaluated
\begin{align*}
\ave{ &\unit_{A'} \otimes \sigma_{B_{1}} \otimes \sigma_{B_{2} C'}, \Phi_{A'C'}^b \otimes B_{ B_{1} B_{2} }^{b} }\\
&= \frac{1}{2} \ave{ \sigma_{B_{1}} \otimes \sigma_{B_{2}}, B_{ B_{1} B_{2} }^{b} } = \frac{1}{2} \ave{ \tau_{B_{1}} \otimes \tau_{B_{2}}, B_{ B_{1} B_{2} }^{b} } = \frac{ p_{b} }{2}.
\end{align*}
Combining these two results yields
\begin{align*}
\ave[\big]{ \lambda_{A' B_{1}}\tran \otimes \lambda_{B_{2} C'}\tran , &\Phi_{A'C'}^b \otimes B_{ B_{1} B_{2} }^{b} }\\
&\geq \frac{ p_{b} }{ 2 ( 1 + \eta_{C'} ) } \big[ 4 s( \eta_{A'} ) g( \beta_{b} ) - t(\eta_{A}) \big],
\end{align*}
which implies
\begin{align*}
Q( \sB,\Phi ) &\geq \frac{1}{4} \sum_{ b = 0 }^{3} \ave[\big]{ \lambda_{A' B_{1}}\tran \otimes \lambda_{B_{2} C'}\tran , \Phi_{A'C'}^b \otimes B_{ B_{1} B_{2} }^{b} }\\
&\geq \frac{ 4 s( \eta_{A'} ) \sum_{b} p_{b} g( \beta_{b} ) - t(\eta_{A'}) }{ 8 ( 1 + \eta_{C'} ) }\\
&= \frac{ 4 s( \eta_{A'} ) g( \betaave ) - t(\eta_{A'}) }{ 8 ( 1 + \eta_{C'} ) }\\
&= \frac{ 4 s( \eta_{A'} ) q - t(\eta_{A'}) }{ 8 ( 1 + \eta_{C'} ) }.
\end{align*}
This bound still depends on $\eta_{A'}$ and $\eta_{C'}$ and in order to remove this dependence we must minimize over $\eta_{A'}, \eta_{C'} \in [0, \eta^{*}]$. Since for $q \geq \frac{1}{2}$ the numerator is strictly positive, the minimization over $\eta_{C'}$ reduces to simply setting $\eta_{C'} = \eta^{*}$, which leads to the main result of the theorem.
\end{proof}
\section{Noise tolerance for the GHZ measurement}

In the previous section, we have shown that Theorem 1, in which we self-test a BSM in the case of ideal statistics, can be turned into Theorem 2, which is noise tolerant.
We explicitly presented a generalization of Theorem 1 to tilted BSM and GHZ measurement and discussed further generalization to other cases. In the following, we argue that Theorem 3 (self-testing of the GHZ measurement) can be made noise tolerant using the same approach. We expect this method generalizes to other cases.

Our method is based on existing results about the self-test of the GHZ state $\GHZ^{0,+}$ with Mermin inequality. For a tripartite state $\tau_{ABC}$ maximally violating Mermin inequality, it yields a product channel $\Gamma=\Gamma_A\otimes\Gamma_B\otimes\Gamma_C$ which maps $\tau_{ABC}$ to $\GHZ^{0,+}$. 
We first prove in Step 1 that after Rob obtained result $r$, the post measured state is mapped to the appropriate version of the $\GHZ$ state with $\Gamma$ i.e.~$\Gamma(\tau^r_{ABC})=\GHZ^r$. 
Without loss of generality, we assume that $\Gamma_A,\Gamma_B,\Gamma_C$ are locally applied by Alice, Bob, Charlie before Rob's measurement: now these three parties have a qubit each and share $\sigma_{A'R_A},\sigma_{B'R_B},\sigma_{C'R_C}$ with Rob such that $\sigma^r_{A'B'C'}=\GHZ^r$ after Rob measured $r$.

Then, in Step 2, we introduce the Choi-Jamio{\l}kowski isomorphism $\Lambda=\Lambda_{R_A}\otimes\Lambda_{R_B}\otimes\Lambda_{R_C}$ associated to the state $\sigma_{A'R_A}\otimes\sigma_{B'R_B}\otimes\sigma_{C'R_C}$ (up to normalization). By construction the measurement operators of Rob $R_{ R_{A} R_{B} R_{C} }^{r}$ are mapped to the post measured state shared between, Alice, Bob, Charlie:  $\Lambda( R_{ R_{A} R_{B} R_{C} }^{r} )=\sigma^r_{A' B' C'}$. As the marginal states $\sigma_{A'},\sigma_{B'},\sigma_{C'}$ are maximally mixed, the channel $\Lambda$ is unital, which proves Theorem 3.

In the noisy case, the proof has the same structure. As the self-test of $\GHZ^{0,+}$ with Mermin inequality is noise tolerant, for a tripartite state $\tau_{ABC}$ and sufficiently high violation of the Mermin inequality, there exist channels $\Gamma=\Gamma_A\otimes\Gamma_B\otimes\Gamma_C$ such that $\Gamma(\tau_{ABC})\approx\GHZ^{0,+}$. The rest of the proof directly applies. In particular, $\Gamma(\tau^r_{ABC})\approx\GHZ^r$ and if $\tilde\Lambda$ is the channel corresponding to $\sigma_{A' R_A}\otimes\sigma_{B' R_B}\otimes\sigma_{C' R_C}$ through the Choi-Jamio{\l}kowski isomorphism, we have $\tilde\Lambda( R_{ R_{A} R_{B} R_{C} }^{r} )=\sigma^r_{A'B'C'}$.  However, as the marginal states $\sigma_{A'},\sigma_{B'},\sigma_{C'}$ are no longer necessarily maximally mixed, $\tilde\Lambda$ may not be unital. Hence, we have to introduce new states $\lambda_{A'R_A},\lambda_{B'R_B},\lambda_{C'R_C}$ associated to a Choi-Jamio{\l}kowski isomorphism $\Lambda=\Lambda_{R_A}\otimes\Lambda_{R_B}\otimes\Lambda_{R_C}$ such that:
\begin{itemize}
\item $\lambda_{A'},\lambda_{B'},\lambda_{C'}$ are maximally mixed.
\item $\lambda_{A'R_A},\lambda_{B'R_B},\lambda_{C'R_C}$ are close to $\sigma_{A'R_A},\sigma_{B'R_B},\sigma_{C'R_C}$.
\end{itemize}
When this is the case, we have $\Lambda( R_{ R_{A} R_{B} R_{C} }^{r} )\approx \tilde\Lambda( R_{ R_{A} R_{B} R_{C} }^{r} )=\sigma^r_{A'B'C'}\approx\GHZ^r$ with $\Lambda$ unital, which proves the noisy variant of the theorem.

To find the new states $\lambda_{A'R_A},\lambda_{B'R_B},\lambda_{C'R_C}$, we can use the two constructions given in Eq.~(\ref{eq:lambdaB2C-def}). Their distance to $\sigma_{A'R_A}, \sigma_{B'R_B}, \sigma_{C'R_C}$ can be controlled in an analogous manner once we have established a bound on the bias of the marginals and this can be achieved by grouping two parties together. If we want to estimate the bias of Alice's marginal, we group Bob and Charlie together and we observe that now we have a maximally entangled two-qubit state between $A$ and $BC$, which allows us to use Lemma~\ref{lem:fidelity-marginal-tradeoff}. This proves a robust version of Theorem 3.

\section*{Appendix F: Choi-Jamio{\l}kowski isomorphism}\label{appendix_choi}

We recall here the definition of the Choi-Jamio{\l}kowski isomorphism.
Let $\rho_{AB}$ acting over $\cH_A\otimes\cH_B$ be a (possibly not normalized) bipartite state. Then its associated Choi-Jamio{\l}kowski isomorphism $\Gamma:\cH_B\rightarrow\cH_A$ is defined by the identity
\begin{equation}
\forall \sigma, \Gamma(\sigma)=\Trr{B}{\unit\otimes\sigma^T\cdot \rho_{AB}}.
\end{equation}
We have the following proposition, which can directly be deduced from the definition.
\begin{prop}\label{appendix_choirewritting}
Let $\rho_k$ acting over $\cH_{A_k}\otimes\cH_{B_k}$ and $\Gamma^k:\cH_{k}\rightarrow\cH_{k}$ be the associated Choi-Jamio{\l}kowski isomorphism. Let $\Omega$ be an operator of $\cH_{B}:=\bigotimes_k\cH_{k}$. Then
\begin{equation}
\bigotimes_k\Gamma^k (\Omega) = \Trr{B}{\Omega^T\cdot\bigotimes_k\rho_k}.
\end{equation}
\end{prop}
\begin{proof}
We introduce a decomposition $\Omega=\sum_{i} \bigotimes_k\omega_{k,i}$ where $\omega_{k,i}$ is an operator of $\cH_{k}$ and apply the definition of the Choi map:
\begin{align*}
\bigotimes_k\Gamma^k (\Omega)&=\bigotimes_k\Gamma^k (\sum_{i} \bigotimes_l\omega_{l,i})
=\sum_{i}\bigotimes_k\Gamma^k (\omega_{l,i})\\
&=\sum_{i}\bigotimes_k\Trr{\cH_{k}}{\unit_{A_k}\otimes\omega^T_{k,i}\cdot\rho_{k}}\\
&=\sum_{i}\Trr{\cH}{\bigotimes_k\unit_{A_k}\otimes\omega^T_{k,i}\cdot\rho_{k}}\\
&=\Trr{\cH}{\bigotimes_k\unit_{A_k}\otimes\Omega^T\cdot\rho_{k}}.
\end{align*}
\end{proof}
\ifthenelse{\equal{\shortversion}{1}}
{\end{comment}}{}

\end{document}